\newcommand{\nocontentsline}[3]{}
\newcommand{\toclesslab}[3]{\bgroup\let\addcontentsline=\nocontentsline#1{#2\label{#3}}\egroup}
\newcommand{\tocless}[2]{\bgroup\let\addcontentsline=\nocontentsline#1{#2}\egroup}
\theoremstyle{definition}
\newtheorem{fact}{Fact}
\newtheorem{prop}{Proposition}
\newcommand{\OO}{{\mathcal{O}}}
\DeclareMathOperator{\tr}{tr}
\DeclareMathOperator{\Real}{Re}
\DeclareMathOperator{\Var}{Var}
\definecolor{mypurple}{RGB}{164,64,214}
\definecolor{mypink}{RGB}{255, 0, 213}
\newcommand{\F}{\mathcal{F}}
\renewcommand{\L}{\mathcal{L}}
\newcommand{\wH}{\widetilde{H}}
\newcommand{\eqg}{\overset{0}{=}}
\newcommand{\ket}[1]{| #1 \rangle}
\newcommand{\bra}[1]{\langle #1 |}
\def\l@subsubsection#1#2{}
\begin{document}

\title{New bounds on adaptive quantum metrology under Markovian noise}

\author{Kianna Wan}
\email{kianna@stanford.edu}
\affiliation{Stanford Institute for Theoretical Physics, Stanford University, Stanford, CA 94305, USA}
\author{Robert Lasenby}
\email{rlasenby@stanford.edu}
\affiliation{Stanford Institute for Theoretical Physics, Stanford University, Stanford, CA 94305, USA}

\date{\today}

\begin{abstract}
	We analyse the problem of estimating a scalar parameter $g$ that controls the Hamiltonian of a quantum system subject to Markovian noise.
	Specifically, we place bounds on the growth rate of the quantum Fisher information with respect to $g$, in terms of the Lindblad operators and the $g$-derivative of the Hamiltonian $H$.
	Our new bounds are not only more generally applicable than those in the literature---for example, they apply to systems with time-dependent Hamiltonians and/or Lindblad operators, and to infinite-dimensional systems such as oscillators---but are also tighter in the settings where previous bounds do apply.
	We derive our bounds directly from the master equation describing the system, without needing to discretise its time evolution.
	We also use our results to investigate how sensitive a single detection system
	can be to signals with different time dependences.
	We demonstrate that the sensitivity bandwidth is related to the quantum fluctuations of $\partial H/\partial g$, illustrating how `non-classical' states can enhance the range of signals that a system is sensitive to, even when they cannot increase its peak sensitivity.
\end{abstract}

\maketitle

{\hypersetup{linkcolor=blue}
\tableofcontents
}

\section{Introduction}

Many metrological problems take the form of detecting
a small, effectively classical influence acting on
a system. Examples include standard tasks
such as radio wave detection, magnetometry,
and accelerometry, as well as problems in fundamental physics
such as gravitational wave detection
and searches for new forces.

For real-world apparatuses, the sensitivity to small
signals is often constrained by practical noise sources
such as vibrations, or by sources of uncertainties such
as fabrication errors.
However, in some cases, it is possible to suppress
these issues to the point where the fundamental limits imposed
by quantum mechanics become important. For example,
at high enough frequencies, the dominant noise
source in the LIGO gravitational wave
detectors is quantum shot noise~\cite{Saulson1994}.
In order to determine how best to conduct
measurements, it is therefore important
to understand the quantum mechanical limits on signal detection.

If our detector system is under perfect
control, then there is a simple
answer to this question. We will model
a `classical' influence acting on a quantum
system as a c-number parameter that affects
the system's Hamiltonian $H$. 
For a single real parameter $g$,
the distinguishability of different
values is set by the
`quantum Cram\'er-Rao bound' (also referred
to as the
`fundamental quantum limit', or FQL)~\cite{10.1038/nphoton.2011.35},
in terms of the quantum fluctuations of
$H' \equiv \partial H/ \partial g$.
The FQL has been used to analyse
systems including gravitational
wave detection experiments~\cite{1608.00766,1903.09378}
and axion dark matter detection experiments~\cite{10.1103/PhysRevD.103.075007},
providing insights into their sensitivity limits.

However, in many cases, a detector
system is coupled to a complicated environment,
which can prevent us from reaching the FQL sensitivity.
A range of papers in the quantum metrology
literature have investigated sensitivity limits under various different assumptions about the environment
and its coupling to the detector system.

A common and useful approximation takes the
environment to be a Markovian bath, in which case
the evolution of the detector system is governed
by a Lindblad master equation~\cite{10.1017/CBO9780511813948}. This was
examined in a number of recent papers~\cite{10.22331/q-2017-09-06-27,prx,preskill,10.1103/PhysRevResearch.2.013235}, which
considered parameter estimation for a time-independent
$H'$ acting on a finite-dimensional detector
system. They found that if $H'$ can be written as a suitable
combination of the Lindblad operators associated
with the coupling to the environment, then the
sensitivity to $g$---quantified by the
quantum Fisher information (QFI)---scales at best
linearly over large times. Conversely, if this
condition is not satisfied, the QFI can grow as $\sim\! t^2$.

In this paper, we analyse Hamiltonian parameter
estimation for a system evolving according
to a Markovian master equation, without
assuming time-independence or finite dimensionality.
We derive bounds on the growth rate of the
QFI; in the case of finite-dimensional systems
with time-independent $H'$ and Lindblad operators,
these sharpen the previously derived limits.
Our new bounds also apply more broadly, including to
infinite-dimensional systems such as oscillators,
and to time-dependent $H'$ and Lindblad
terms. Similarly to~\cite{10.22331/q-2017-09-06-27,prx,preskill,10.1103/PhysRevResearch.2.013235}, they
encompass detection schemes involving general
adaptive strategies.

We arrive at our bounds using different methods from those in prior work. Unlike previous analyses, our derivation does not rely on discretising the time evolution
of the system. Instead, we use symmetric logarithmic derivatives
to directly bound the time derivative of the QFI
starting from the master equation, enabling
a fully time-dependent treatment.
We compare our results in detail to existing bounds, 
showing that ours are tighter even in the restricted settings where the latter apply.

Being able to analyse time-dependent signals
allows us to address additional metrological questions.
The QFI assumes a signal that depends on a single
real parameter $g$; in particular, the time dependence
of the signal (for given $g$) is assumed
to be known. However, in many situations,
we are interested in a whole
range of possible signals, with different time dependences.
Examples include signals with \textit{a priori} unknown
frequency, such as axion
dark matter with an unknown mass~\cite{10.1146/annurev-nucl-102014-022120},
or gravitational waves from a pulsar of unknown
spin rate~\cite{10.1017/pasa.2015.35}.

In such cases, we are interested in both
the peak sensitivity at the optimal frequency,
and the bandwidth over which we
can (approximately) attain this sensitivity.
We demonstrate that, in a range of different
situations, the sensitivity bandwidth can
be related to the short-time growth
rate
of the QFI, which itself is related
to the quantum fluctuations of $H'$.
Consequently,
using states with enhanced fluctuations
can increase the sensitivity bandwidth,
even when the noise terms prevent this from
increasing the peak sensitivity.

For example, if we consider the problem of
near-resonant force detection---detecting
a small classical forcing acting on a damped
harmonic oscillator---then there is a state-independent
bound on the sensitivity. This can be attained,
for an on-resonance forcing, by a critically-coupled
oscillator
in its ground state.
Using `non-classical' states, such as squeezed
states or Fock states, cannot improve this peak sensitivity
(assuming Markovian damping). However, such states
can have larger $H'$ fluctuations, and correspondingly,
can enhance the bandwidth over which near-peak sensitivities
can be achieved. This behaviour has been noted
for schemes using squeezed coherent states~\cite{konrad}---we show
that it applies more generally.

\section{Hamiltonian parameter estimation}
\label{secqfi}

We will suppose that we have some quantum system that can be described by a master equation in Lindblad form~\cite{nc,10.1017/CBO9780511813948}, 
\begin{equation}
	\dot \rho = -i [H,\rho] + \sum_j
	\Big(L_j \rho L_j^\dagger - \frac{1}{2}\left\{L_j^\dagger L_j, \rho\right\}\Big),
	\label{eqsme}
\end{equation}
where $\rho$ is the system's density operator,
$H$ is its Hamiltonian, the $L_j$
are Lindblad operators describing its interaction
with a Markovian environment, and $\dot{\rho} \equiv \partial \rho / \partial t$ is the time derivative of $\rho$.  
We assume that $H$ depends on some c-number parameter
$g \in \mathbb{R}$ that we are trying
to determine.
For instance, $g$ might correspond to the strength of the
signal. A simple but illustrative setup of this kind is a two-level system
with a $g$-dependent energy splitting,
$H = g \epsilon \sigma_z$, subject to dephasing noise, which is described by a single Lindblad operator $L_1 = \sqrt{\gamma} \sigma_z$:
\begin{equation}
	\dot \rho = - i g\epsilon[\sigma_z,\rho] + \gamma (\sigma_z \rho \sigma_z - \rho).
	\label{eq_tlsdephase}
\end{equation}

Henceforth, all operators are functions of $t$ and $g$ in general, except for the Lindblad operators, which depend only on $t$.\footnote{More generally, one could also consider
$g$-dependent Lindblad operators,
corresponding to parameters of the environment
and its coupling that are not known~\cite{10.1103/PhysRevA.72.052334,10.1088/0305-4470/39/46/015,10.1088/1367-2630/15/7/073043,1611.09165,10.1103/PhysRevLett.118.100502,prx}; we discuss
this case in Appendix~\ref{sec_lindblad_app}.} We suppress the $t$- and $g$- dependence of operators as well as scalar quantities in many equations; unless noted otherwise, these equations hold for all values of $t$ and $g$. Throughout, we use over-dots to denote differentiation with respect to $t$, and primes to denote differentiation with respect to $g$. 

Given $\rho(t_0,g)$ as a function of $g$ for a fixed time $t_0$, the master equation (Eq.~\eqref{eqsme}) tells us the state $\rho(t,g)$ at later times $t$.
The distinguishability of different $g$ values can be quantified via the quantum Fisher information (QFI), which is the maximum, over different measurements,
of the classical Fisher information (with
respect to $g$) of the measurement
outcome~\cite{10.1103/PhysRevLett.72.3439}.
The quantum Cram\'er-Rao bound (QCRB) states
that the uncertainty $\delta g$ in measuring $g$
is bounded as $\delta g \ge 1/\sqrt{n \F}$,
where $\F$ is the QFI and $n$ is the number
of repetitions of the experiment
(specifically, the QCRB lower-bounds the variance of any unbiased
estimator of $g$; this bound can always be attained in the large-$n$ limit)~\cite{10.1103/PhysRevLett.72.3439}.

\subsection{General QFI bounds}
\label{sec_qfibounds}

For given $\rho = \rho(t,g)$, a useful quantity to define is the `symmetric logarithmic derivative' (SLD) $\mathcal{L} = \mathcal{L}(t,g)$,  which is a Hermitian operator
with the property that
\begin{equation} \label{eq_ldefn}
	\rho' = \frac{1}{2}\left(\rho \L + \L \rho\right).
\end{equation} (As we show in
Appendix~\ref{appL}, if it is possible to find $\L(t_0,g)$ at some time $t_0$, then it is always possible to find $\L(t,g)$ at later times.)
The QFI is then given by~\cite{10.1103/PhysRevLett.72.3439}
\begin{equation}
    \F = \tr\left(\rho \L^2\right).
\end{equation}
Under the assumption that $\rho, \rho'$ and $\L$, are differentiable with respect to time,\footnote{More generally, as we discuss in Appendix~\ref{appL}, it is sufficient for our purposes that $\rho,\rho'$, and $\L$ are differentiable with respect to $t$ (for given $g$) except possibly at a set of isolated points in time, as long as $\F$ is continuous at these points (since then $\F$ can still be bounded by integrating $\dot \F$ over the intervals between the isolated points).  We show in Appendix~\ref{appL} that both of these criteria are fulfilled under appropriate regularity conditions.}
\begin{equation}
	\dot \F = \tr \left(\dot \rho \L^2 + \rho \dot{\L} \L + \rho \L \dot{\L}\right),
\end{equation}
which can be written using Eq.~\eqref{eq_ldefn} as~\cite{10.1103/PhysRevA.82.042103}
\begin{equation} \label{eq_dfq2}
	\dot \F = 2 \tr \left(\dot \rho' \L\right) - \tr\left(\dot{\rho} \L^2\right).
\end{equation}
This form allows us to use the master equation expression
for $\dot \rho$ (Eq.~\eqref{eqsme}); after
some algebra, we obtain
\begin{equation}	\label{eqfqdot}
	\dot \F = 2 i \tr\left(\rho [H',\L]\right)
	- \sum_j \tr \left(\rho [L_j, \L]^\dagger [L_j,\L]\right).
\end{equation}
Note that this expression depends only on $H'$, and not on $H$. This is to be expected,
since $g$-independent {evolution} cannot contribute
to distinguishing between different values of $g$.

Next, to bound $\dot{\F}$, it is helpful to subtract combinations of the Lindblad operators $L_j$ from $H'$. Specifically, for any scalar coefficients $\alpha \in \mathbb{R}$ and $\beta_j, \gamma_{jk} \in \mathbb{C}$ with $\gamma_{jk} = \gamma_{kj}^*$ (where the indices $j,k$ run over the same range as for the Lindblad operators), we can write
\begin{equation}
	H' = G+ \alpha I + \sum_j (\beta_j^* L_j + \beta_j L_j^\dagger) + \sum_{j,k} \gamma_{jk}
	L_j^\dagger L_k
	\label{eq_hdecomp}
\end{equation}
for some Hermitian operator $G$. (Recall that the $t$- and $g$-dependence of operators is implicit; similarly, the coefficients $\alpha, \beta_j, \gamma_{jk}$ are allowed to vary with $t$ and $g$.)
Then,
\begin{align}
	i [H',\L] &= i[G,\L] + \sum_j \left[
	i \Big(\beta^*_j + \sum_k \gamma_{kj} L_k^\dagger\Big) [L_j,\L] + {\rm h.c.}\right]
	 \nonumber \\
	&= i [G,\L] +
\sum_j	\left( A_j^\dagger [L_j,\L] + {\rm h.c.}\right) \label{iH'L}
\end{align}
where we define
\begin{equation} \label{eq:A_j} A_j \coloneqq i \left(\beta_j I + \sum_k \gamma_{jk} L_k\right). \end{equation} Substituting Eq.~\eqref{iH'L} into Eq.~\eqref{eqfqdot}, we have
\begin{align}
    \dot{\F} &= 2i\tr(\rho[G,\L]) + 4\sum_j\mathrm{Re}\left[\tr(\rho A_j^\dagger [L_j, \L])\right] \nonumber \\
    &\quad - \sum_j \tr\left(\rho[L_j,\L]^\dagger[L_j,\L]\right) \\
    &\leq 2i\tr\left(\rho[G,\L]\right) \nonumber \\
    &\quad + 4\sum_j \left\{\sqrt{\tr(\rho A_j^\dagger A_j)} \sqrt{\tr\left(\rho[L_j,\L]^\dagger [L_j,\L]\right)} \right.\nonumber\\
    &\quad - \left. \frac{1}{4} \tr\left(\rho[L_j,\L]^\dagger [L_j,\L]\right)\right\} \label{cauchyschwarz}\\
    &\leq 2i \tr(\rho[G,\L]) + 4\sum_j \tr(\rho A_j^\dagger A_j). \label{quadratic_bound}
\end{align}
Here, the first inequality follows from applying the Cauchy-Schwarz inequality to the Hilbert-Schmidt inner product $\langle A, B\rangle = \tr(A^\dagger B)$ (with $A = A_j\sqrt{\rho}$ and $B = [L_j,\L]\sqrt{\rho})$. To obtain the second inequality, we use the fact that for any $a \geq 0$, the function $f(x) = \sqrt{ax} - x/4$ is maximised at $x = 4a$, with value $f(4a) = a$.

For general $G$, we cannot obtain an $\L$-independent bound for
the $2 i \tr(\rho [G,\L])$ term, but we can bound it in terms of $\F$ as follows.
For any Hermitian operator $A$, we define a Hermitian operator $\L_A$ such that
\begin{equation} \label{eqLG} i [\rho, A] = \frac{1}{2}(\rho \L_A + \L_A \rho)
\end{equation}
(this is always possible, as shown in Appendix~\ref{appqfi}). Then,
\begin{align}
	2 i \tr (\rho [G,\L])
	 &= \tr (\L (2 i [\rho,G])) \nonumber \\
	 &= \tr (\L (\rho \L_G + \L_G \rho)) \nonumber \\
	 &= 2 \Real \tr (\rho \L_G \L) \nonumber \\
	 &\le 2 \sqrt{\tr(\rho \L_G^2)} \sqrt{\tr(\rho \L^2)} \nonumber \\
	 &= 2 \sqrt{\F_G \F}
\end{align}
where we write
\begin{equation} \label{eq:FA}
    \F_A \coloneqq \tr(\rho \L_A^2)
\end{equation}
for the quantum Fisher information with respect to a general Hermitian operator $A$, as defined in~\cite{10.1088/1751-8113/47/42/424006}.\footnote{In~\cite{10.1088/1751-8113/47/42/424006}, $\F_A$ is denoted by $F_Q[\rho,A]$.}
Thus, we can bound the rate of increase
$\dot{\F}$ (at a given $t$, $g$) by
\begin{align} \label{eqlim0}
	\dot \F \le 4 \left(\sqrt{\frac{\F_G}{4}\F}
	+ \sum_j \langle A_j^\dagger A_j\rangle \right)
\end{align}
for any $G$ and $A_j$'s satisfying Eqs.~\eqref{eq_hdecomp} and~\eqref{eq:A_j}, where angled brackets denote
the expectation value in the state~$\rho$, $\langle A \rangle \coloneqq \tr(\rho A)$.
Eq.~\eqref{eqlim0} is saturated iff
$[L_j,\L] \sqrt\rho = 2 A_j \sqrt\rho$ for all $j$
and $\L \sqrt \rho = c \L_G \sqrt \rho$
for some $c \ge 0$ (or $\L_G \sqrt\rho = 0$).
The RHS depends on the coefficients $\beta_j, \gamma_{jk}$
in Eq.~\eqref{eq_hdecomp} (note that the coefficient $\alpha$ has no effect, since $\F_{G + \alpha I} = \F_G$ for any $\alpha$, cf.~Fact~\ref{fact:FaA+bI} in Appendix~\ref{appqfi}), so we obtain the tightest bound by considering the minimum over possible values. This gives us our main result:
\begin{tcolorbox}
	\begin{equation}
		\dot \F \le 4 \min_{\substack{\beta_j,\gamma_{jk} \in \mathbb{C} \\ \gamma_{jk} = \gamma_{kj}^*}}
		\left(\sqrt{\frac{\F_G}{4}\F}
	+ \sum_j \langle A_j^\dagger A_j \rangle \right)
		\label{eqlim1}
	\end{equation}
	for any $t$ and $g$, where for each choice of $\beta_j$ and $\gamma_{jk}$, $G$ and $A_j$ are defined as in Eqs.~\eqref{eq_hdecomp} and~\eqref{eq:A_j}, and $\F_G$ is defined as in Eq.~\eqref{eq:FA}.
\end{tcolorbox}
As presented, Eq.~\eqref{eqlim1} may seem
somewhat abstract.
For any values of $t$ and $g$,
and operators $\rho$ and $H'$ at those
values, one could perform the optimisation
over $\beta_j, \gamma_{jk}$ computationally (for given
$\F$),
but the minimum does not have an analytic
form in general.
Moreover, as we discuss in Section~\ref{sec_control},
it may not be possible to saturate the resulting bound.
However, as we show in subsequent sections, even simple, potentially non-optimal
choices of $\beta_j, \gamma_{jk}$ can give useful
constraints on $\dot{\F}$ and hence $\F(t)$, which enable us to tighten bounds in the existing
literature; see Section~\ref{sec_compare}. Furthermore, for particular
systems of interest, it is often straightforward to optimise
$\beta_j, \gamma_{jk}$, as illustrated
in Section~\ref{secqho}.
The power of Eq.~\eqref{eqlim1} is that it
enables one to bound the QFI
growth rate in terms of $H'$,
without reference to the full Hamiltonian $H$, which (as
we discuss in Section~\ref{sec_control})
may be arbitrarily complicated.
In addition, Eq.~\eqref{eqlim1} naturally allows for 
$t$- and $g$-dependence; the optimal choices
for $\beta_j$, $\gamma_{jk}$ may
vary with $t$ and $g$.

Since $\F_G$ is convex in $\rho$~\cite{10.1088/1751-8113/47/42/424006}, the RHS
of Eq.~\eqref{eqlim0} is convex in $\rho$,
for given $\F$. This means that,
as we would expect, the $\dot \F$
bound for a probability mixture of states is at
most as large as the probabilistic average
of the $\dot \F$ bounds for the states in the mixture.
For pure states,
$\frac{1}{4} \F_G = \langle G^2 \rangle - \langle G \rangle^2
\eqqcolon \Var (G)$ corresponds to the quantum fluctuations
of the operator $G$,
so in general, we have
$\frac{1}{4} \F_G \le \Var(G)$ (cf.~Fact~\ref{fact:FVarA} in Appendix~\ref{appqfi}).

To derive a bound from Eq.~\eqref{eqlim0} or~\eqref{eqlim1}
that applies to a whole class of states, as opposed
to some specific $\rho$, we need
to bound
$\F_G$ and $\sum_j \langle A_j^\dagger A_j \rangle$ 
for that class.
In some circumstances, this is very simple;
for example, if we take $\gamma_{jk} = 0$ for all $k$,
then $\langle A_j^\dagger A_j \rangle = |\beta_j|^2$,
independent of the state.
More generally, we have the state-independent bounds
\begin{equation}
	\langle A_j^\dagger A_j \rangle
	\le \Big\|A_j^\dagger A_j\Big\|,
	\qquad
	\frac{1}{4}\F_G \le \Var(G) \le \min_{x\in \mathbb{R}} \| G - x I \|^2
\end{equation}
(where $\|\cdot\|$ denotes the operator norm), which are well-defined whenever the operators $A_j$ and $G$ are bounded, e.g., for any finite-dimensional system.
Tighter bounds may be possible for more restricted
classes of states
(and if our operators have unbounded norm,
as can occur in infinite-dimensional systems, then
state-independent bounds will not exist).
For instance, in the case of a damped
harmonic oscillator with quadratic forcing,
as we discuss in Section~\ref{secqho},
the best large-$\F$ bound comes from $\sum_j A_j^\dagger A_j \propto N$, where
$N$ is the oscillator's number operator.
This operator is unbounded on the full Hilbert space,
but if the expectation value of the oscillator's energy
is bounded, then we can bound $\sum_j \langle A_j^\dagger
A_j\rangle$. Note that we do not necessarily need
to restrict to a finite-dimensional
 subspace, as done in~\cite{preskill}; in the previous example,
the set of states with $\langle N \rangle < \overline N$
is not contained within any finite-dimensional subspace, for any $\overline N > 0$.

\subsection{Quantum control and adaptive protocols}
\label{sec_control}

By interpreting our master equation appropriately,
we will see that our bounds can encompass
general measurement strategies, including
adaptive procedures.
In many cases, our detection
system consists of a `probe' subsystem on which 
$H'$ and $L_j$ act, along with other degrees of freedom
such as ancillae. To take into account
interactions of the probe with the ancillae,
we will consider our master equation to describe the state $\rho$ of
the entire detection system, excluding
the `environment'
degrees of freedom that are responsible for the 
Lindblad terms in the master equation. Effectively, we can think of our
detection system as representing the degrees of freedom
we have good control over (as opposed
to the Markovian environment). Hence,
the Hamiltonian $H$ can include
arbitrary ($g$-independent)
interactions of the probe with ancillae
degrees of freedom.
Since Eq.~\eqref{eqlim1}
does not depend on the $g$-independent
part of $H$, the inclusion of such interactions
does not affect our $\dot \F$ bounds.
While the master equation assumes that $\rho$ evolves
smoothly in time, there is no issue with taking the evolution
to be very fast (e.g., to model
instantaneously applied quantum gates\footnote{Alternatively,
we can observe that instantaneous $g$-independent
operations cannot affect the QFI,
and $\dot \F$ obeys our bounds both before and after
such operations.}), as long
as the assumption of Markovian noise
is not violated.

We can also model measurement-induced non-unitarity,
including adaptive procedures,
via the principle of deferred measurement~\cite{nc,10.1103/PhysRevLett.106.090401}.
That is, any procedure with intermediate measurements and classically controlled feedback is equivalent to a unitary procedure with all measurements performed at the end. 
Thus, our analysis
allows for `full and fast quantum control'
of the kind assumed
in the literature~\cite{10.22331/q-2017-09-06-27,prx,preskill}.

It is clear that, for some systems,
the bound in Eq.~\eqref{eqlim1} cannot be
saturated, for large enough $\F$.
For example, consider a single spin with time-independent $H'$ and Lindblad
operators $L_1 \propto \sigma_x$,
$L_2 \propto \sigma_y$,
$L_3 \propto \sigma_z$. Then, as we take $\F = \tr(\rho \L^2)$ large, one or more of the $[L_j, \L]\sqrt{\rho}$ becomes arbitrarily large, while $2A_j\sqrt{\rho}$ remains bounded for the optimal choice of $\beta_j, \gamma_{jk}$ in Eq.~\eqref{eqlim1}, which sets $G = 0$. Therefore, we cannot saturate Eq.~\eqref{eqlim1} in this case---indeed, we obtain large and negative
$\dot \F$ for large enough $\F$.

To avoid this issue, the information corresponding
to large $\F$ needs to be stored somewhere
that is not affected by the Lindblad terms.
If we split our system into a probe
system,
along with noiseless ancillae, then
the information can simply live in the ancilla
degrees of freedom. A natural question is then
whether, for a fixed probe system with given
$H'$ and $L_j$, one can always find
an ancilla-assisted scheme
that saturates our growth rate bound.
In some situations, we can exhibit specific
detection schemes which do so (as in Section~\ref{secqho}).
We leave the question of
whether, for arbitrary $H'$ and $L_j$, one can find
an extended system (i.e., probe and ancillae) with
$\rho$ and $\L$ saturating Eq.~\eqref{eqlim1}
for any $\F$,
to future work.\footnote{This would be a stronger
statement than those proved in e.g., \cite{preskill,10.1103/PhysRevResearch.2.013235}, which showed that, in the limit of 
large $\F$, it is possible to find an extended system
which is asymptotically close to saturating the bound.}

\subsection{Initial QFI growth}
\label{secshort}

To obtain bounds on $\F(t)$, given some initial
conditions, we can integrate
Eq.~\eqref{eqlim1}. For example,
if we prepare our system in some $g$-independent
state at $t = 0$, then $\rho'(t=0) = 0$ and we can set $\L(t = 0) = 0$,
so $\F(t = 0) = 0$. Then, by Eq.~\eqref{eqlim1},
\begin{equation}
	\int_0^{\F(t)} \frac{d\F}{2 \sqrt{\F_G \F} + 4 \sum_j
	\langle A_j^\dagger A_j\rangle}
	\le t
\end{equation}
for any choice of $\beta_j,\gamma_{jk}$ in the
denominator.
By Eq.~\eqref{eqfqdot}, $\dot \F(t=0) = 0$,
so $\F(t) = \OO(t^2)$ for small $t$.
Therefore, for small enough $t$, the denominator
is minimised by taking
$\beta_j, \gamma_{jk} = \OO(t)$.
With this choice, $\F_G = \F_{H'}(0) + \OO(t)$, where $\F_{H'}(0) \coloneqq \F_{H'}(t=0)$, 
and we have
\begin{equation}
	\frac{\sqrt \F}{\sqrt{\F_{H'}(0)}}(1 + \OO(t)) \leq t
	\quad \Rightarrow \quad
	\F(t) \le \F_{H'}(0) t^2 (1 + \OO(t)),
\end{equation}
so the short-time behaviour of $\F$ is bounded
in terms of $\F_{H'}(0)$.

In fact, the initial growth of $\F(t)$
attains this bound to leading order in $t$.
To see this, note that for small $\L$, corresponding to small
$\F$, differentiating
the expression for $\dot \F$ in Eq.~\eqref{eqfqdot}
with respect to time gives
\begin{equation}
	\ddot \F \simeq 2 i \tr(\rho[H',\dot \L])
	\label{eqfqddot}
\end{equation}
since the other terms are suppressed for small $\L$.
Similarly,
\begin{equation}
	\dot \rho' \simeq \frac{1}{2}(\rho \dot \L + \dot \L \rho)
\end{equation}
From the master equation (Eq.~\eqref{eqsme}),
the only unsuppressed term in the expression for $\dot \rho'$
is
$\dot \rho' \simeq - i [H',\rho]$,
so by Eq.~\eqref{eqLG}, we can replace
the $\dot \L$ term in Eq.~\eqref{eqfqddot}
by $\L_{H'}$.
Hence, for small $t$,
\begin{equation}
	\ddot \F \simeq 2 i \tr (\rho [H', \L_{H'}])
	= 2 \tr (\rho \L_{H'}^2) = 2 \F_{H'}
	\label{eq_fqinitial}
\end{equation}
so $\F(t) \simeq \frac{1}{2}\ddot \F(t=0) t^2 = \F_{H'}(0)t^2$. For pure initial
states, this gives $\F(t) \simeq 4 \Var(H'(t=0)) t^2$,
which corresponds to the QCRB~\cite{10.1038/nphoton.2011.35}.

More generally, this relationship,
for noiseless evolution under a Hamiltonian,
is indeed what motivates the definition
of $\F_{H'}$~\cite{10.1088/1751-8113/47/42/424006}.
We went through the derivation of Eq.~\eqref{eq_fqinitial}
to highlight the approximations
under which this applies in the presence of
Lindblad terms.

\subsection{Comparison to previous results}
\label{sec_compare}

In recent years, a number of papers~\cite{10.22331/q-2017-09-06-27,prx,preskill,10.1103/PhysRevResearch.2.013235} have
investigated the same problem we analyse in this
work: how fast the QFI can grow, for a system
governed by a master equation with a parameter-dependent
Hamiltonian. These papers adopted
a somewhat different approach to ours;
instead of bounding $\dot \F$, they discretised
the time evolution
between $t=0$ and $t=t_{\rm tot}$ into $N$ equal segments,
and used bounds on the QFI from $N$ identical
operations~\cite{10.1103/PhysRevLett.113.250801,10.1088/1751-8113/41/25/255304} to bound $\F(t_{\rm tot})$ in the
limit as $N \rightarrow \infty$.\footnote{These
operations represented
the $g$-dependent part of the Hamiltonian; arbitrary $g$-independent operations
between segments were allowed.}
This approach most naturally applies to
time-independent master equations,
and results in looser bounds than ours, as
we demonstrate below.
In addition, they generally worked
with finite-dimensional systems, which we will
not restrict ourselves to.

We can see from Eq.~\eqref{eqlim1} that the
large-$\F$ scaling of $\dot \F$ depends on whether
$G$ can be set to zero. In Section~\ref{sec_hls},
we compare our bounds to those of
\cite{prx,preskill} in the case where
we can set $G=0$ (for time-independent $H'$).
We show that our bounds have the same
large-time scaling as those from~\cite{prx,preskill},
but are tighter at any finite time.
In the case where $G$ cannot be set to zero, \cite{preskill}
showed that, by using error correction
techniques, one can achieve $\F(t) \sim t^2$ scaling
at large times. We demonstrate in Section~\ref{sec_hnls} that their
schemes have asymptotically optimal scaling,
but that tighter bounds than those derived
in~\cite{prx,preskill} can be placed
on $\F$ at finite times.

\subsubsection{$H'$ in Lindblad span (HLS)}
\label{sec_hls}

The case of time-independent $H'$ and $L_j$
was analysed in~\cite{prx,preskill}.
They demonstrated that if
we can write
\begin{equation}
H' = \widetilde \alpha I + \sum_j (\widetilde\beta_j^* L_j + \widetilde\beta_j L_j^\dagger)
+ \sum_{j,k}\widetilde\gamma_{jk} L_j^\dagger L_k
	\label{eqhls}
\end{equation}
for some coefficients
$\widetilde\alpha, \widetilde\beta_j, \widetilde\gamma_{jk}$---a condition referred to as $H'$
being in the `Lindblad span' (HLS) by~\cite{preskill,zhouthesis}---and we start
with $\rho'(t = 0) = 0$ (and hence $\F(t=0)=0$),
then
\begin{equation}
\F(t) \le 4 \bigg\lVert \sum_j \widetilde A_j^\dagger \widetilde A_j\bigg\rVert t,
	\label{eqlinbound}
\end{equation}
where $\widetilde A_j\coloneqq i (\widetilde\beta_j + \sum_k \widetilde\gamma_{jk} L_k)$.\footnote{In
the notation of~\cite{prx,preskill}, the Hamiltonian
$H$ acts only on the probe subsystem, denoted
by $\mathcal{H}_P$, rather than on the full probe-plus-ancillae
system $\mathcal{H}_P \otimes \mathcal{H}_A$, as in our setup.
However, since the QFI bounds depend only
on $H'$, which acts only on the probe subsystem in both cases,
this does not make any difference.}

Eq.~\eqref{eqlinbound} is an immediate
consequence of our general bound, Eq.~\eqref{eqlim1}.
Moreover, we can see from Eq.~\eqref{eqlim1} that
Eq.~\eqref{eqlinbound} cannot be tight.
Specifically,
Eq.~\eqref{eqlinbound} is the special case of Eq.~\eqref{eqlim0} where $G$ is taken to be $0$,
whereas (as analysed in Section~\ref{secshort})
the optimal choice at small times sets $\beta_j, \gamma_{jk}$ such that
$G = H' + \OO(t)$, which gives
a bound $\F(t) \le \F_{H'} t^2 (1 + \OO(t))$.
Of course, when $t$ is large enough, and $\F$
is correspondingly large,
the best bound (from Eq.~\eqref{eqlim1}) on $\dot \F$ will
approach $4 \sum_j \langle A_j^\dagger A_j \rangle$.
Consequently, if there are no restrictions
on $\rho$, then Eq.~\eqref{eqlinbound}
matches the scaling of Eq.~\eqref{eqlim1} at large $t$.\footnote{As noted in~\cite{prx}, one
can replace
$\lVert \sum_j A_j^\dagger A_j \rVert$ in
Eq.~\eqref{eqlinbound}
with $\sum_j \tr (\rho A_j^\dagger A_j)$ (corresponding
to Eq.~\eqref{eqlim0} with $G=0$),
which can have a tighter bound if one
has extra information about $\rho$.}

As an illustration of how Eq.~\eqref{eqlinbound}
can be sharpened using our approach, suppose
that we take
$\alpha = \delta(t)\widetilde{\alpha}$,\footnote{We make this choice for presentational simplicity---recall that $\alpha$ does not affect the $\dot{\F}$ bound of Eq.~\eqref{eqlim0}, due to Fact~\ref{fact:FaA+bI}.} $\beta_j = \delta(t)\widetilde \beta_j$, and
$\gamma_{jk} = \delta(t)\widetilde \gamma_{jk}$ in Eq.~\eqref{eqlim0},
for some real function $\delta$ whose value we will choose
at each $t$ to obtain the best bound (see Eq.~\eqref{eq:deltachoice}).
Then, we have $G = (1-\delta)H'$, so Eq.~\eqref{eqlim0} becomes (using Fact~\ref{fact:FaA+bI})
\begin{equation}
	\dot \F \le 4\left(|1 - \delta| \sqrt{\frac{\F_{H'}}{4}\F} + \delta^2 \sum_j \langle \widetilde{A}_j^\dagger \widetilde{A}_j\rangle\right)
	\label{eq27}
\end{equation}
Now, supposing that $\frac{1}{4}\F_{H'} \le c_1^2$
and $\sum_j \langle \widetilde{A}_j^\dagger \widetilde{A}_j\rangle \le c_2$
for some non-negative $c_1$ and $c_2$,
\begin{equation}
	\dot \F \le
4 \left(|1 - \delta| c_1 \sqrt{\F}
	+ \delta^2 c_2 \right).
\end{equation}
The RHS is minimised by taking
\begin{equation} \label{eq:deltachoice}
	\delta = \begin{dcases}
		\frac{c_1 }{2 c_2}\sqrt{\F} \qquad &\sqrt{\F} \leq \frac{2 c_2}{c_1}  \\
		1  \qquad &\sqrt{\F} \ge \frac{2 c_2}{c_1}
	\end{dcases}
\end{equation}
so
\begin{equation} \label{eq:fqdothls}
	\dot \F \le \begin{dcases}
		4 c_1 \sqrt{\F}\left(1 - \frac{c_1}{4 c_2}\sqrt{\F}\right) \qquad &\sqrt{\F} \leq \frac{2 c_2}{c_1} \\
		4 c_2 \qquad &\sqrt{\F} \ge \frac{2 c_2}{c_1}.
	\end{dcases}
\end{equation}
Eqs.~\eqref{eq27}--\eqref{eq:fqdothls}
rely only on the assumption
that $H'$ is in the Lindblad span at $t$
(and do not require time-independence).
However, if $H'$ is in the Lindblad span
at all times, and we have some time-independent $c_1$
and $c_2$, then integrating
Eq.~\eqref{eq:fqdothls}
 starting from $\F(t = 0) = 0$ gives
\begin{equation}
	\F(t) \le \begin{dcases}
		\frac{16 c_2^2}{c_1^2} \left(1 - 2^{-t/t_c}\right)^2 \qquad & t \leq t_c
		\\
		4 c_2 \left(\frac{c_2}{c_1^2} + (t - t_c)\right) \qquad & t \ge t_c
	\end{dcases}
	\label{eqfqt}
\end{equation}
where $t_c \coloneqq \frac{2c_2}{c_1^2} \ln 2$.
If we take $c_2 = \lVert \sum \widetilde A_j^\dagger \widetilde A_j \rVert$ (which is always $\ge \sum_j \langle \widetilde A_j^\dagger \widetilde A_j\rangle$),
then Eq.~\eqref{eqfqt} scales in the same way as Eq.~\eqref{eqlinbound}
for large $t$, but sharpens it for any finite $t$.
Additionally, Eq.~\eqref{eqfqt} shows that increasing $\F_{H'}$ (which is upper-bounded by $c_1$) does not improve
the large-$t$ scaling of $\F$, but does
decrease the time $t_c$ that it takes to attain
this scaling (we
explore some of the consequences of this in Section~\ref{sec_bw}).
These points are illustrated in the top
row of Figure~\ref{fig_fq1},
which plots $\F(t)$ from Eq.~\eqref{eqfqt}
(and the corresponding $\dot \F(t)$)
for a particular $c_2$ value and different
$c_1$ values. We also plot the initial
$\F(t) \le 4 c_1^2 t^2$ scaling
from Section~\ref{secshort}
as well as the $\F(t) \le 4 c_2 t$ bound (Eq.~\eqref{eqlinbound})
from~\cite{prx,preskill}, illustrating
how our $\F(t)$ bound in Eq.~\eqref{eqfqt} transitions from initial quadratic growth to eventual linear growth. 

It is easy to find examples for which the bound
in Eq.~\eqref{eqfqt} can be further tightened.
As observed in Section~\ref{secshort}, the optimal values
of $\beta_j,\gamma_{jk}$ are $\simeq 0$
at small enough times, and $\simeq \widetilde\beta_j,
\widetilde\gamma_{jk}$ at large enough times.
The choice
$\beta_j = \delta(t)\widetilde \beta_j$,
$\gamma_{jk} = \delta(t)\widetilde \gamma_{jk}$
considered above corresponds to interpolating between
these two extremes in the same way for \emph{all} $j,k$.
However, this is not necessarily optimal.
For example, consider a case where
$H' = \widetilde \beta_1^* L_1 + \widetilde \beta_2^* L_2
+ {\rm h.c.}$, with $|\widetilde \beta_1| \gg |\widetilde \beta_2|$,
but $\lVert \widetilde\beta_1^* L_1 + {\rm h.c.} \rVert \ll
\lVert \widetilde\beta_2^* L_2 + {\rm h.c.} \rVert$.
The optimal choice increases $\beta_2$ from
$0$ to $\widetilde\beta_2$ faster
than it increase $\beta_1$ from $0$ to $\widetilde \beta_1$.
This corresponds to the `approximate error
correction' scenario mentioned in~\cite{preskill}.
Nevertheless, for various systems of interest,
such as the damped harmonic oscillator
we analyse in Section~\ref{secqho},
the uniform $\delta(t)$ interpolation can be
optimal, and
Eq.~\eqref{eqfqt} can be tight.

In~\cite{10.1103/PhysRevResearch.2.013235}, it was shown
that the $\F(t)\sim 4 \lVert \sum_j \widetilde{A}_j^\dagger \widetilde{A}_j \rVert t$
scaling is attainable asymptotically,
by using quantum error correction techniques.
It would be interesting to investigate
under what circumstances it is possible to obtain
$\F(t)$ growth that saturates Eq.~\eqref{eqlim1}
throughout.

Finally, we note that even though we derived
Eqs.~\eqref{eq27}--\eqref{eqfqt} for the purpose
of comparing to previous results, which apply only to
the time-independent case, these equations hold
in more general settings, and we will use them
in subsequent sections. For time-dependent master equations,
Eqs.~\eqref{eq27}--\eqref{eq:fqdothls} hold at any
$t$ where $H'$ is instantaneously in the Lindblad
span. Eq.~\eqref{eqfqt} requires that $H'$ is in
the Lindblad span over the time interval $[0,t]$,
and that one has time-independent bounds $c_1$
and $c_2$. However, even for time-dependent
$c_1$ and $c_2$, one can still derive bounds
on $\F(t)$ analogous to Eq.~\eqref{eqfqt} by
integrating Eq.~\eqref{eq:fqdothls}. Furthermore,
one can integrate starting at arbitrary values of
$\F(t=0)$, not just $\F(t=0) = 0$ (as assumed
in~\cite{prx,preskill}).

\subsubsection{$H'$ not in Lindblad span (HNLS)}
\label{sec_hnls}

In the case where
$G$ cannot be set to zero in Eq.~\eqref{eq_hdecomp}
(labelled the `{Hamiltonian-not-in-Lindblad-span}' (HNLS) case by~\cite{preskill}), \cite{preskill} showed that
by error correction techniques,
is a scheme which achieves
\begin{equation}
	\F(t) = 4 t^2 \min_{\alpha,\beta_k,\gamma_{jk}}
	\lVert G \rVert^2
	\label{eqg2}
\end{equation}
for time-independent $H'$ and $L_j$
(they considered finite-dimensional systems,
so the RHS is always well-defined).
From our $\dot{\F}$ bound in Eq.~\eqref{eqlim1}, we can see
that the large-$t$ scaling of Eq.~\eqref{eqg2}
is the best possible: from Fact~\ref{fact:FVarA},
$\frac{1}{4} \F_G \le \Var(G) \le \langle G^2 \rangle
\le \| G\|^2$ for any $G$, so integrating Eq.~\eqref{eqlim0} (for a fixed, time-independent $G$)
gives $\F(t) \le 4 t^2 \|G\|^2 (1 + o(1))$.
Thus, since our methods apply to general adaptive control
schemes (under the assumption of Markovian noise), cf.~ Section~\ref{sec_control}, our results show that
the scheme of~\cite{preskill} is asymptotically
optimal (\cite{preskill} showed
that it is optimal among schemes in which error correction is applied in a particular way,
but this does not encompass all possible detection schemes).

However, we know from Section~\ref{secshort}
that at small $t$,
the QFI can in general grow faster than
Eq.~\eqref{eqg2}.
In Appendix~\ref{appC2}, we show
that the tightest bound that can be placed
on $\F(t)$,
if one uses the methods of~\cite{prx,preskill}, is
\begin{equation}
	\F(t) \le 4 t^2 \min_{\alpha \in \mathbb{R}} \lVert H' - \alpha I \rVert^2
	\label{eqfqtpreskill}
\end{equation}
(from Section~\ref{secshort}, this
bound can be saturated, to leading order in
$t$, for small $t$). This bound applies regardless of whether $H'$ is in the Lindblad span.
In the special case where Eq.~\eqref{eqg2} 
is minimised by some $G$ of the form $H' - \alpha I$, 
Eq.~\eqref{eqfqtpreskill} is tight for all $t$.
Otherwise, we
can use our results to place bounds
on $\F(t)$ that are tighter than Eq.~\eqref{eqfqtpreskill} for all $t$.

As an example, for any $H'$ (that may or may not be in the Lindblad span), we can write
$H' = G_0 + G_S$, where $G_S = \widetilde \alpha I +
\sum_j (\widetilde \beta_j^* L_j + \widetilde \beta_j L_j^\dagger)
+ \sum_{j,k} \widetilde \gamma_{jk} L_j^\dagger L_k$ for
some coefficients $\widetilde \alpha,
\widetilde{\beta_j}, \widetilde{\gamma}_{jk}$,
and choose $\alpha = \delta(t) \widetilde{\alpha}$, $\beta_j = \delta(t) \widetilde{\beta}_j$, and $\gamma_{jk} = \delta(t)\widetilde{\gamma}_{jk}$ for some real function $\delta$.
Then,
$G = H' - \delta G_S = (1-\delta)H' + \delta G_0$, so 
Eq.~\eqref{eqlim0} gives
\begin{equation}
	\dot \F \le 2 \sqrt{\F_{(1 - \delta)H' + \delta G_0} \F}
	+ 4\delta^2 \sum_j \langle \widetilde{A}_j^\dagger \widetilde{A}_j
	\rangle,
\end{equation}
where $\widetilde{A}_j \coloneqq i(\widetilde\beta_j + \sum_k \widetilde\gamma_{jk} L_k)$.
Then, using Facts~\ref{fact:FaA+bI} and~\ref{fact:FA+B} (proven in Appendix~\ref{appqfi}), we have
\begin{equation}
	\dot \F \le 2 \Big(|1-\delta|\sqrt{\F_{H'}} + |\delta| \sqrt{\F_{G_0}}\Big) \sqrt \F
	+ 4\delta^2 \sum_j \langle \widetilde{A}_j^\dagger \widetilde{A}_j
	\rangle
	\label{eq34}
\end{equation}
Suppose that $\F_{G_0}\le 4 c_0^2$, $\F_{H'} \le 4 c_1^2$,
and $\sum_j \langle \widetilde{A}_j^\dagger \widetilde{A}_j\rangle \le c_2$ for some non-negative
$c_0,c_1,c_2$. Since it is always possible to choose
$c_0 \le c_1$ (we obtain equality by taking $G_S = 0$),
and it can be checked that $c_0 > c_1$ results in looser bounds, we suppose $c_0 \le c_1$ wlog.
Then, choosing $\delta$ to minimise
the RHS of Eq.~\eqref{eq34}, we obtain
\begin{equation} \label{FdotHNLS}
	\dot \F \le \begin{dcases}
		4 c_1 \sqrt{\F}\left(1 - \frac{(c_1 - c_0)^2}{4 c_1 c_2}\sqrt{\F}\right) \quad & \sqrt{\F} \leq \frac{2 c_2}{c_1 - c_0} \\
		4 \left(c_0 \sqrt{\F} + c_2\right) \quad & \sqrt{\F} \ge \frac{2 c_2}{c_1 - c_0}
	\end{dcases}
\end{equation}
(where one should take the first case if $c_0 = c_1$).
Now, if $c_0,c_1,c_2$ are time-independent, then
integrating this bound starting
from $\F(t =0) = 0$, we arrive at\footnote{Note that
if we are in the HLS case for all $t$,
so that we can set $G_0 = 0$, and thus $c_0 = 0$,
then Eq.~\eqref{eqfhnls} reduces to Eq.~\eqref{eqfqt}. 
Indeed, Eq.~\eqref{FdotHNLS} handles the most generic time-dependent setting, where $H'$ may satisfy the HLS condition at some times but not at others.}
\begin{equation}
	\F(t) \le
	\begin{dcases}
		\frac{16 c_1^2 c_2^2}{(c_1-c_0)^4}
		\left(1 - \left(\frac{2c_1}{c_1 + c_0}\right)^{-t/t_c}\right)^2 &\quad t \le t_c \\
		y(t-t_c)^2 &\quad t \ge t_c
	\end{dcases}
	\label{eqfhnls}
\end{equation}
where $t_c \coloneqq \frac{2c_2}{(c_1 - c_0)^2}\ln\left(\frac{2 c_1}{c_1 + c_0}\right)$, and we define
the function $y$ in Appendix~\ref{appC3} (Eq.~\eqref{y}). We note that Eq.~\eqref{FdotHNLS} can be used to derive bounds on $\F(t)$ even when $c_0, c_1, c_2$ are not necessarily time-independent.
Also, per the discussion below Eq.~\eqref{eqfqt},
in some cases it will be possible to find
tighter bounds than that given by Eq.~\eqref{FdotHNLS}. 

In Appendix~\ref{appC3}, we show that
for large $t$, $y(t) \sim 2 c_0 t$,
so for $t \gg t_c$,
the bound in Eq.~\eqref{eqfhnls} scales as
\begin{equation} \label{eqFtscaling}
	\F(t) \le 4 c_0^2 t^2 (1 + \OO(t^{-1/2}))
\end{equation}
This agrees with the scaling derived below
Eq.~\eqref{eqg2}, confirming
that the scheme of~\cite{preskill} is asymptotically
optimal.
Conversely, for $t\ll t_c$,
the bound in Eq.~\eqref{eqfhnls} scales as
$\F(t) \le 4 c_1^2 t^2 + \OO(t^3)$,
matching the small-time growth
rate (see Section~\ref{secshort}).
It would be interesting to
investigate under what circumstances
Eq.~\eqref{FdotHNLS} can be saturated
at finite times, as opposed to asymptotically
(as achieved by the schemes in~\cite{preskill}).

\subsection{Damped harmonic oscillator}
\label{secqho}

\begin{figure*}[t]
\includegraphics[width=0.4\textwidth]{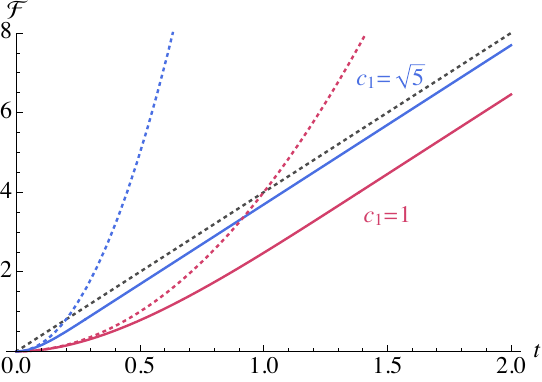}
\qquad
\includegraphics[width=0.4\textwidth]{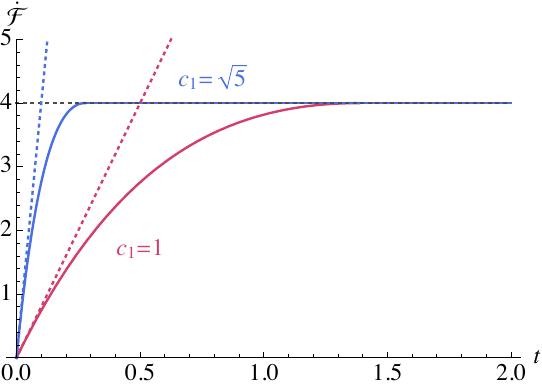}

\vspace{0.3cm}

\includegraphics[width=0.4\textwidth]{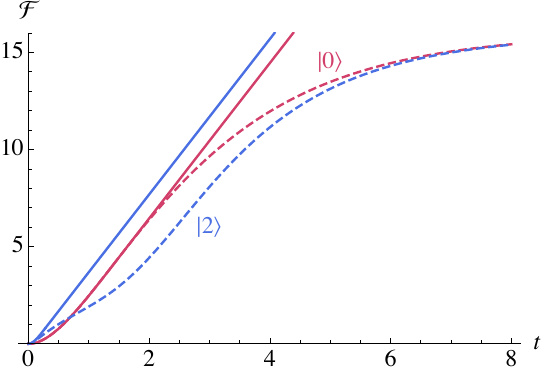}
\qquad
\includegraphics[width=0.4\textwidth]{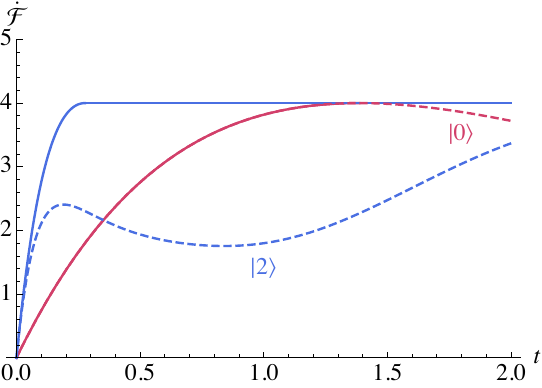}
	\caption{\emph{Top-left panel:}
	Plots of $\F(t)$ bounds in the HLS case from Section~\ref{secqfi}, compared to those from previous work.
	The solid curves correspond to the bound
	on $\F(t)$ from Eq.~\eqref{eqfqt};
	the red curves take $c_1 = 1$,
	$c_2 = 1$ (all quantities are taken to be dimensionless),
	and the blue curves take
	$c_1 = \sqrt 5$, $c_2 = 1$.
	The gray dotted line shows
	the $\F(t) \le 4 c_2 t$ bound from~\cite{prx,preskill} (Eq.~\eqref{eqlinbound}),
	while the red and blue dotted curves
	correspond to the quadratic bounds
	from Eq.~\eqref{eqfqtpreskill},
	for the appropriate $c_1$.
	\emph{Top-right panel:} As per the top-left
	panel, but plotting $\dot \F(t)$ bounds.
	This illustrates how the large-$t$ scaling
	of our bounds is only affected by $c_2$,
	and not $c_1$.
	\emph{Bottom-left panel:}
	Solid curves as in top row plots; dashed curves plot $\F(t)$ for different detection schemes
	with a damped harmonic
	oscillator probe, as considered
	in Section~\ref{secqho} (taking $\gamma = 1$
	and $n_T = 0$).
	The dashed red curve corresponds
	to preparing an oscillator in its ground
	state, and allowing it to evolve unimpeded
	under the influence of the linear
	forcing. This attains our $\F(t)$ bound (solid red curve)
	for $c_1 = c_2 = 1$ at $t \le t_c = 2(c_2/c_1^2) \ln 2
	\simeq 1.4$,
	but then asymptotes to a constant value
	of $\F(t)$.
	The blue dashed curve corresponds to preparing an
	oscillator in the $N=2$ Fock state, and allowing
	it to evolve. This has faster
	$\F(t)$ growth rate at small times, but that growth
	slows down for $t \gtrsim t_c = \frac{2\ln2}{5} \simeq 0.3$.
	As well as representing
	the Eq.~\eqref{eqfqt} bound with $c_1 = 1$,
	the solid red curve also corresponds to $\F(t)$ for a harmonic oscillator
	prepared in its ground state,
	but coupled to a continuum of ancillae
	for $t \ge t_c = 2 \ln 2$, to maintain
	the system at $\dot\F = 4 c_2$.
	\emph{Bottom-right panel:}
	As per the bottom-left panel,
	but plotting $\dot \F(t)$.
	}
	\label{fig_fq1}
\end{figure*}

In this subsection, we study
the case where the probe system is a (weakly)
damped harmonic oscillator, subject to some near-resonant
forcing.
This is a rather more complex and interesting
example than the single spin from Eq.~\eqref{eq_tlsdephase},
and illustrates different points of theoretical
and practical interest.
Since an oscillator is an infinite-dimensional system,
the methods of~\cite{prx,preskill} do not apply
directly. While one can sometimes consider restrictions to a finite-dimensional
subspace---for instance, to states with occupation
number $\le N_0$ for some $N_0$---physical processes
often generate evolution that cannot be
contained within any finite-dimensional subspace
(e.g., any continuous range of coherent state
amplitudes).
In addition, our methods allow us to treat time-dependent
$H'$, as we will further expand on in Section~\ref{sec_bw}.
From an experimental perspective, many
systems of interest, such as microwave, optical,
or acoustic modes, are well-described by oscillators.

We will show that simple, physically relevant
forms for $H$ are such that $H'$ is in the Lindblad
span (cf.~Section~\ref{sec_hls}), so we can obtain 
an $\F$-independent bound on $\dot{\F}$ by choosing $G = 0$ in Eq.~\eqref{eqlim0}.
Moreover, this bound can be attained, for
on-resonance forcings, by an oscillator
prepared in any coherent state. However, non-classical states
can lead to faster short-time QFI growth,
as we will illustrate.

In the rotating wave approximation (RWA), the
master equation for a harmonic oscillator interacting
with a Markovian environment at temperature
$T$ is~\cite{10.1007/978-3-540-28574-8}
\begin{align}
	\dot \rho &= - i [H,\rho] + \gamma(n_T+1)
	\left(a \rho a^\dagger - \frac{1}{2}
	\{a^\dagger a, \rho \}\right) \nonumber \\
	&\quad + \gamma n_T \left(a^\dagger \rho a -
 \frac{1}{2}
	\{a a^\dagger, \rho \}\right).
	\label{eq_oscsme}
\end{align}
Here, $a^\dagger$ and $a$ are the oscillator's
creation and annihilation operators,
$H$ is the RWA interaction Hamiltonian,
$\gamma$ is the coupling to the Markovian
bath (if the oscillator has no other influences
acting on it, then $\gamma$ is its damping rate),
and $n_T$ is the thermal occupation number
corresponding to the temperature of the bath
($n_T = (e^{\omega/T} + 1)^{-1}$,
where $\omega$ is the resonant frequency of
the oscillator).
Thus, we have two Lindblad operators,
\begin{equation}
	L_1 = \sqrt{\gamma(n_T + 1)} a, \quad
L_2 = \sqrt{\gamma n_T} a^\dagger.
\end{equation}

The simplest kind of forcing we can consider
is linear in the creation/annihilation operators and in $g$, so that
\begin{equation} H' = i \epsilon(t) a^\dagger + {\rm h.c.}, \end{equation}
corresponding to a near-resonant force on the oscillator
(with $\epsilon(t)$ representing the fiducial amplitude
and phase of the forcing, and $g$ representing its relative amplitude).
In this case, $H'$ can
be decomposed in the Lindblad span as in Eq.~\eqref{eqhls}, but the decomposition is not unique.
To obtain the best large-$\F$ bound
from Eq.~\eqref{eqlim0}, we want to
minimise $\sum_j \widetilde{A}_j^\dagger \widetilde{A}_j$, which is
always a multiple of the identity in this case, since $\widetilde{\gamma}_{jk} = 0$ for any such decomposition. 
To do so,
we should choose
$\widetilde{\beta}_1 = \frac{i\epsilon}{\sqrt\gamma}\frac{\sqrt{n_T+1}}{2 n_T + 1}$ and $\widetilde{\beta}_2 = \frac{\sqrt{n_T}}{2 n_T + 1} \frac{-i \epsilon^*}{\sqrt\gamma}$, giving
$\sum_j \widetilde{A}_j^\dagger \widetilde{A}_j =  \frac{|\epsilon|^2}{\gamma(2n_T+1)} I$.\footnote{This form is
not surprising, since one expects $\dot \F$ to increase
more slowly in the presence of thermal noise.
The position fluctuations of an oscillator in a thermal
state are $\langle x^2 \rangle \propto 2 n_T + 1$,
and $\sum_j \widetilde{A}_j^\dagger \widetilde{A}_j$
scales with $n_T$ in the inverse manner.}
Note that if any of $\epsilon$, $n_T$,
or $\gamma$ are time-dependent, then this quantity will
be time-dependent.

On the other hand, the short-time QFI
growth rate is
determined by $\F_{H'}$ (cf.~Section~\ref{secshort}).
From Fact~\ref{fact:FVarA},
$\F_{H'} \le 4 \Var(H')$,
with equality for pure states.
This means that $\F_{H'}$
is bounded by the fluctuations in the appropriate
quadrature (for $\epsilon$ real, simply the momentum
fluctuations of the oscillator).
Since a coherent state is a minimum-uncertainty
state, with equal uncertainties in all quadratures,
these fluctuations can only be made large
by putting the oscillator into a `non-classical'
state---that is, a state which is not a probability
mixture of coherent states.

In the notation of Eq.~\eqref{eq:fqdothls}, we have
$c_2 =  \frac{|\epsilon|^2}{\gamma(2n_T+1)}$
and $c_1^2 \le 2 |\epsilon|^2 \sigma_{\widetilde p}^2$,
where $\widetilde p \coloneqq \frac{1}{\sqrt{2}|\epsilon|} H'$ 
is the appropriate quadrature operator, and
$\sigma_{\widetilde p}^2 \coloneqq
\langle \widetilde p^2 \rangle  - \langle \widetilde p \rangle^2$ is its variance.
If $c_1$ and $c_2$ are time-independent,
then starting from $\F(t = 0) = 0$,
Eq.~\eqref{eqfqt} gives
\begin{equation}
	\F(t) \le \begin{dcases}
		\frac{8 |\epsilon|^2}{\gamma_T^2 \sigma_{\widetilde p}^2}
		\left(1 - e^{-\gamma_T t  \sigma_{\widetilde p}^2}\right)^2 &\quad
		t \leq \frac{\ln 2}{\gamma_T \sigma_{\widetilde p}^2} \\
		\frac{4 |\epsilon|^2}{\gamma_T}
		t
		- \frac{2  (\ln 4 - 1)|\epsilon|^2}{\gamma_T^2 \sigma_{\widetilde p}^2}
		&\quad
		t \ge \frac{\ln 2}{\gamma_T \sigma_{\widetilde p}^2}
	\end{dcases}
	\label{eq_oscbound1}
\end{equation}
where $\gamma_T \coloneqq \gamma (2n_T + 1)$. 
If $c_1$ and $c_2$ vary with time, then we
could integrate Eq.~\eqref{eq:fqdothls} to
obtain analogous bounds.

The simplest situation for which we can analyse
the actual behaviour of $\F(t)$ (as opposed to an upper bound)
is when $H = gH'$ and $n_T = 0$. In
this case, coherent states evolve to other coherent states, so
if the oscillator
starts in a coherent state, its state remains pure
and coherent throughout~\cite{10.1007/978-3-540-28574-8}.
The coherent state $|\alpha(t,g)\rangle$,
with amplitude $\alpha(t,g)$,
evolves as $\dot \alpha = g \epsilon - \frac{\gamma}{2}
\alpha$.
Consequently, if
$\epsilon$ and $\gamma$ are time-independent, then
\begin{align}
	\alpha(t,g) &=\frac{2 g \epsilon}{\gamma}(1 -
	e^{-\gamma t /2}) + e^{-\gamma t/2}\alpha(t=0,g) \nonumber \\ &\eqqcolon g \alpha_\epsilon(t) + e^{-\gamma t/2}\alpha_0
\end{align}
To find $\F(t)$, we can use the fact that if
$\rho(t,g)$ is pure for all $g$ (as is true
here), then $\L = 2\rho'$. Consequently, writing $\rho = |\psi\rangle\langle\psi|$,
we have $\F = 4| \|\varphi_\perp\rangle\|^2$,
where $|\varphi_\perp\rangle$ is the
component of $\partial_g\ket{\psi}$
orthogonal to
$\ket\psi$. 
To evaluate $|\varphi_\perp\rangle$,
we note that $D(\beta)D(\delta) =
e^{(\beta \delta^* - \beta^* \delta)/2}D(\beta+\delta)$
for any $\beta,\delta \in \mathbb{C}$, where $D$ is the
displacement operator which generates coherent
states, $D(\beta) = e^{\beta a^\dagger - \beta^* a}$.
That is, $D(\beta)D(\delta)$ and $D(\beta + \delta)$
are equal, up to phase. 
Furthermore,
\begin{equation}
	\partial_g D(g \alpha_\epsilon)
	= \partial_g e^{g (\alpha_\epsilon a^\dagger
	- \alpha_\epsilon^* a)}
	= (\alpha_\epsilon a^\dagger - \alpha_\epsilon^* a)
	D(g \alpha_\epsilon)
\end{equation}
Consequently, if $\alpha_0$ is independent of $g$, then
\begin{align}
	\ket{\varphi _\perp}
	= e^{i \phi}
	D(e^{-\gamma t/2}\alpha_0)
	(\alpha_\epsilon a^\dagger - \alpha_\epsilon^* a)
	D(g \alpha_\epsilon) \ket 0
\end{align}
for some $\phi \in \mathbb{R}$.
Evaluating $\|\ket{\varphi_\perp}\|^2$, we obtain
\begin{equation}
	\F(t) = 4 |\alpha_\epsilon|^2 = \frac{16|\epsilon|^2}{\gamma^2} (1 - e^{-\gamma t/2})^2
	\label{eq_ftosc}
\end{equation}
for any $\alpha_0$.
Comparing this to the $t \leq t_c = \ln2 / (\gamma \sigma_{\widetilde{p}}^2)$ bound in
Eq.~\eqref{eq_oscbound1}, we see that they are
equal for $\sigma_{\widetilde p}^2 = \frac{1}{2}$,
which is true for any coherent state. Hence,
preparing the
oscillator in a coherent state and letting
it evolve unimpeded saturates
the corresponding $\F(t)$ bound for $t \leq t_c$.
At $t = t_c$, Eq.~\eqref{eq_ftosc}
attains the $\dot \F = 4c_2 = 4 |\epsilon|^2/\gamma$
bound from Eq.~\eqref{eq:fqdothls}.

For an unperturbed
oscillator, $\dot \F(t)$
starts to decrease for $t > t_c$; left to evolve
by itself, the oscillator's state
would simply asymptote to $\alpha = 2 g \epsilon / \gamma$,
so $\F$ would asymptote to $16 |\epsilon|^2/\gamma^2$,
and $\dot \F$ would tend towards zero.
This is illustrated by the dashed red curves
in the bottom row of Figure~\ref{fig_fq1}.
To avoid this, we need to transfer
some of the information in the state of the oscillator
into ancillary degrees of freedom.
The simplest way to achieve this is to couple the
oscillator to a continuum of bosonic modes,
such that energy loss into the modes acts,
from the oscillator's point of view, like
an extra source of damping.
For example, if we viewed the oscillator
as a cavity mode,
we could couple this mode to a waveguide.
To maintain $\dot \F$ at the optimal value,
it turns out that the extra damping
rate should be the same as the Markovian
damping rate $\gamma$, corresponding to critical coupling
of the oscillator to the continuum, giving an
effective damping rate of $2\gamma$.

If one switches on this continuum coupling at $t_c$,
then it maintains the system at the $\dot \F = 4 |\epsilon|^2 / \gamma$ limit, and so saturates the $\F(t)$ bound
in Eq.~\eqref{eq_oscbound1} throughout.
This means that we can attain the bound in
Eq.~\eqref{eq_oscbound1} (with $\sigma_{\widetilde{p}}^2 = 1/2$),
which corresponds to the solid red curves
in Figure~\ref{fig_fq1}.
Practically, of course, it may be simpler to
couple the mode to the continuum from the beginning;
in that case, $\dot \F$ asymptotes towards
the $4|\epsilon|^2/\gamma$ limit over a few
damping times.

If we allow the oscillator to be in
a non-classical state, then the state-independent
$\dot \F \le {4 |\epsilon|^2}/{\gamma_T}$
bound still applies, but $\dot \F$ can increase
faster starting from zero.
From Section~\ref{secshort}, the short-time
growth rate of $\F$ is set by $\F_{H'}(t=0)$,
and $\F_{H'} \le \Var(H')$ with equality
for pure states, so we obtain faster growth
by using states with larger $H'$ fluctuations.
This is illustrated by the dashed blue curves
in the bottom panels of Figure~\ref{fig_fq1},
which correspond to preparing the oscillator
in an $N=2$ Fock state.

Since large fluctuations require large energies,
one question we can ask is how large $\F_{H'}$
can be, given some bound on $\langle N \rangle$.
Writing $\widetilde x \coloneqq \frac{1}{\sqrt{2}|\epsilon|}(\epsilon a^\dagger
+ {\rm h.c.})$
for the quadrature orthogonal to $\widetilde{p}$,
we have the uncertainty principle relation
$\sigma_{\widetilde x}^2 \sigma_{\widetilde p}^2 \ge 1/4$,
as well as the relation ${\widetilde x}^2 + {\widetilde p}^2 = 2 N + 1$, where $N$ is the oscillator's number operator. 
Together, these
imply that~\cite{10.1103/PhysRevLett.111.173601}
\begin{equation}
	\sigma_{\widetilde p}^2 \le \frac{1}{2} + \langle N\rangle
+ \sqrt{\langle N \rangle (\langle N \rangle + 1)}
	\label{eqsigmax}
\end{equation}
This inequality is saturated iff the state
is a minimum-uncertainty state with zero mean,
i.e., a squeezed coherent state. (For
comparison, the $N$th Fock state
has $\sigma_{\widetilde p}^2 = N + \frac{1}{2}$.)
Hence,
since $\Var(H')
	= 2 |\epsilon|^2 \sigma_{\widetilde p}^2$, the maximum $\F(t)$ at small
$t$, for given $\langle N \rangle$, is
attained by preparing the oscillator
in a squeezed coherent state, with squeezing quadrature appropriate
to the phase of the forcing
(this is analogous to the conclusions in~\cite{10.1103/PhysRevLett.111.173601}). Since enhanced $\widetilde p$ fluctuations correspond
to suppressed
$\widetilde x$ fluctuations, and the forcing
displaces the state
in the $\widetilde x$ direction,
this is intuitive from a quantum fluctuations viewpoint.

The bound in Eq.~\eqref{eqsigmax} illustrates
that given constraints on our state, we can often
obtain sensible results even for infinite-dimensional
systems, without having to restrict to a finite-dimensional
subspace (as done in e.g.,~\cite{preskill}).
Another example of this is the case of quadratic forcing,
$H' = \omega_f N$ (as can arise in e.g., optomechanical
systems). Taking $n_T = 0$ for simplicity, we have
$\widetilde\gamma_{11} = \omega_f/\gamma$ (in the notation of Eq.~\eqref{eqhls}), so
$\langle \widetilde A_1^\dagger \widetilde A_1 \rangle = {\omega_f^2}
\langle N \rangle/\gamma$. So, if we restrict
to $\langle N \rangle \le \overline N$, then
we can take $c_2 = {\omega_f^2}\overline N/\gamma$
and $c_1^2 \le 2 |\epsilon|^2 (\frac{1}{2} + \overline N +
\sqrt{\overline N (\overline N + 1)})$, and
obtain the corresponding bounds using Eq.~\eqref{eqfqt}.

The linear ($H' = i \epsilon a^\dagger + {\rm h.c.}$)
and quadratic ($H' \propto N$) forcings
are simple, physically important examples which 
are in the Lindblad span. For $H'$ other than
linear combinations of these, we are in the HNLS
case, so $\dot \F$ can in principle grow
without bound. Examples include the `Kerr effect'
Hamiltonian $H' \propto N^2$ considered in~\cite{preskill},
or (for $n_T = 0$) quadratic Hamiltonians
of the form $H' = f a^2 + {\rm h.c.}$
The latter can arise
from coupling to a signal oscillating at close
to twice the natural frequency of the oscillator.
Since this kind of forcing preserves
the occupation number parity of the oscillator,
whereas Lindblad jumps change it, 
schemes such as those based on `cat codes'~\cite{10.1103/PhysRevA.59.2631}
could be used to attain large $\dot \F$.

While the results in this subsection
apply regardless of the $g$-independent dynamics
of the system, they do assume
that the oscillator is described by the master equation
in Eq.~\eqref{eq_oscsme}. In particular,
they rely on the assumption
that the noise is Markovian.
If the correlation time
of the system-environment interaction is
not small enough to be neglected, then techniques
such as dynamic decoupling~\cite{10.1103/PhysRevA.58.2733,10.1103/PhysRevLett.82.2417}
can violate the above bounds.

\section{Sensitivity bandwidth}
\label{sec_bw}

In Section~\ref{secqfi}, we considered
a Hamiltonian parameterised
by a single unknown scalar parameter $g$.
Apart from the value of $g$, we assumed
that the form---in particular, the time
dependence---of our Hamiltonian
was known \emph{a priori}.
However, as discussed in the introduction,
in many circumstances we are interested in a range of possible
time dependences.

There are a number of ways to formalise
this more general problem. We may view
our task as simultaneously estimating 
a large number of parameters controlling
the Hamiltonian, as in the `waveform estimation'
problem discussed in~\cite{10.1103/PhysRevLett.106.090401}.
Alternatively, we could attempt to estimate
a single parameter, e.g., one controlling the 
overall strength of our signal,
with other parameters
viewed as `nuisance parameters'~\cite{10.1088/1751-8121/ab8ef3}.
The latter is the appropriate approach in e.g., axion
dark matter detection, where we are interested in detecting
the presence of a signal,
and not in its detailed time dependence
(at least for initial discovery purposes).

As a simple example, we will study the scenario
of a scalar time dependence;
that is, where the Hamiltonian has the form
$H(t) = f(t,[g_i]) \widetilde{H} + H_c$, 
where $[g_i]$ is some vector of scalar parameters, 
$f(t,[g_i])$ is a scalar function, 
$\widetilde{H}$ is a $t$- and $[g_i]$-independent operator, and $H_c$ is a $[g_i]$-independent operator.
Examples of this form include
oscillatory signals with known
coupling type, but \emph{a priori}
unknown time dependence, as arise in searches
for axion dark matter or gravitational waves.

\subsection{Waveform estimation}
\label{sec_waveform}

We can consider a toy version of the waveform
estimation problem by taking our signal
to be stepwise constant. Specifically, suppose that the
parameter $g_i$ controls the signal
during the time
interval $[t_i,t_{i+1})$,
so that $H = H_c + \sum_i g_i \mathbf{1}_{[t_i,t_{i+1})}
\widetilde H$. If we take the $g_i$ parameters
to be independent, then we can simply consider independent
single-parameter
estimation problems on each time interval $[t_i, t_{i+1})$;
for each of these problems, the bounds we derived in Section~\ref{secqfi}
apply to the QFI $\F_i$ for each parameter $g_i$.

Now, suppose that we have some time-independent bound $\dot \F_{\rm max}$
on the QFI growth rate. As an example, if $\widetilde{H}$ is in the Lindblad span, then Eq.~\eqref{eq:fqdothls} gives such a bound if $c_2$ is time-independent. 
Then, over
the total time $t_{\rm tot} \coloneqq \sum_i (t_{i+1} - t_i)$,
the sum of the individual QFIs $\F_i$ is bounded by
$\sum_i \F_i \le \dot \F_{\rm max} t_{\rm tot}$.
The sum of the QFIs gives an upper bound
for e.g.,
how well we can distinguish between two specific hypotheses
for the vector $[g_i]$.
 Furthermore, suppose that we have a time-independent bound $\F_{\widetilde{H},\max}$ on $\F_{\widetilde{H}}$. 
 Then, by setting $G=H'$ in Eq.~\eqref{eqlim0}
 and integrating from $\F_i = 0$ at $t = t_i$, we
 have $\F_i(t) \le \F_{\wH,\rm max} (t - t_i)^2$
 for $t \ge t_i$.
Substituting this back into Eq.~\eqref{eqlim0},
we have $\dot \F_i \le 2 \F_{\wH,\rm max} (t - t_i)$.
Thus, the time taken to attain $\dot \F_i = \dot\F_{\rm max}$
is at least $t_c \sim \dot \F_{\rm max} / \F_{\wH,\max}$.\footnote{Note that even though $H'$ in each interval 
is time-independent, the $\F \le 4 c_2 t$ bound from
\cite{prx,preskill} is not useful; we need the
finite-time bounds derived in this paper to see
that $\dot\F = 4 c_2$ cannot be attained immediately.}
We can make the constant factor more precise
using the results from
Sections~\ref{sec_hls} and~\ref{sec_hnls};
as discussed below Eq.~\eqref{eqfqt}, in some circumstances
it will also be possible to parametrically tighten this bound.

Hence, if $|t_{i+1} - t_i| \lesssim t_c$ for every $i$,
then $\sum_i \F_i \simeq \dot \F_{\rm max} t_{\rm tot}$
cannot be attained. 
Conversely, for the damped harmonic oscillator analysed in
Section~\ref{secqho}, if we use the state-independent
bound $\dot \F_{\rm max} = 4 |\epsilon|^2 / \gamma$, then
Eqs.~\eqref{eq_oscbound1} and
\eqref{eq_ftosc}
show that if $|t_{i+1} - t_i| \gtrsim t_c$,
then we can approximately attain
$\sum \F_i \simeq \dot \F_{\rm max} t_{\rm tot}$. 
Viewing the time-step widths as corresponding
to the inverse bandwidth of our
signal, the bandwidth $\Delta \omega$ over which we can
attain $\sum \F_i \simeq \dot \F_{\rm max} t_{\rm tot}$ is
$\Delta \omega \lesssim {\F_{\widetilde H,\rm max}}/{\dot \F_{\max}}$. 

This step-function analysis is obviously rather
crude; we leave a more sophisticated analysis (along the lines
of~\cite{10.1103/PhysRevLett.106.090401}, which
treats the noiseless case) to future work.
However, it does illustrate the important point that even though the peak sensitivity to any given
signal is bounded by $\dot \F_{\rm max}$,
the sensitivity bandwidth is controlled by the finite-time
QFI growth rate, and is bounded by
$\F_{\widetilde H,\rm max}/\dot \F_{\rm max}$. In the harmonic oscillator example, we can take $\dot \F_{\rm max} = 4 |\epsilon|^2/\gamma $,
which is independent of the system's state, and
can be attained by an oscillator in its ground
state---in that sense, there is no benefit to using
`non-classical' states. However, as shown in Section~\ref{secqho}, such states
\emph{can} enhance the short-time growth rate
of $\F$, and consequently, the range
of different time dependences we are sensitive to.

\subsection{Nuisance parameters}
\label{sec_nuisance}

\begin{figure*}[t]
\includegraphics[width=0.47\textwidth]{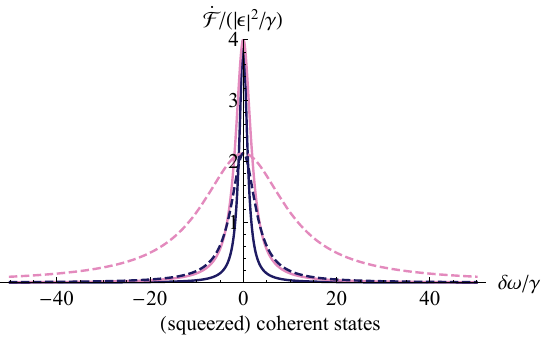} \enspace
\includegraphics[width=0.47\textwidth]{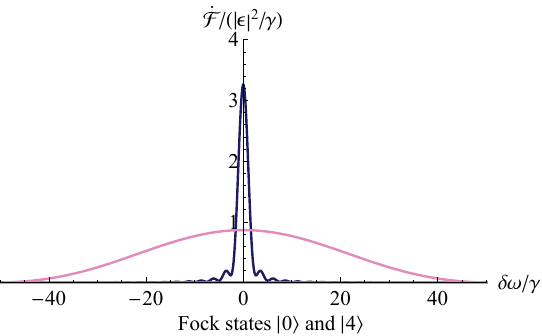}
	\caption{Plots of $\dot \F$
	(at large times)
	for a damped harmonic oscillator, as a function
	of the detuning $\delta \omega$ of the signal
	frequency from the oscillator's resonant 
	frequency. The oscillator is assumed to be governed
	by the master equation in Eq.~\eqref{eq_oscsme}
	(with $n_T = 0$).
	\emph{Left panel}: The blue curve
	corresponds to an oscillator that is critically
	coupled to ancillary degrees of freedom (e.g., a cavity
	mode coupled to a waveguide output),
	and initialised in its ground state.
	As discussed in Section~\ref{secqho},
	this setup attains the $\dot \F 
	\le 4 |\epsilon|^2 / \gamma$ bound
	at large enough times, for a resonant forcing.
	The pink curve corresponds to an oscillator
	critically coupled to a source
	of squeezed vacuum, with squeezing
	parameter $G_s = 4$.
	The dashed pink curve corresponds
	to the oscillator being overcoupled
	to the same squeezed state source by
	a factor $\sim 4 G_s$, which decreases
	on-resonance sensitivity by a constant
	factor, but results
	in a sensitivity bandwidth $\sim G_s$ times larger
	than for preparation in the ground state (Section~\ref{sec_nuisance}).
	This is illustrated by the dashed blue curve,
	which corresponds to preparation in the ground
	state, overcoupled such that the on-resonance
	sensitivity matches that of the dashed pink curve.
	\emph{Right panel:} Plot of the time-averaged
	rate of accumulation of (classical) Fisher information
	for a scheme in which the oscillator is prepared in a Fock
	state, and then measured in the Fock basis after
	evolving for a time $t_1$, as
	described in Section~\ref{sec_prepare}.
The blue curve corresponds to preparation
	in the ground state, followed by measurement at
	the time which optimises on-resonance sensitivity
	($t_1 \simeq 2.5/\gamma$), while the pink
	curve corresponds to preparation
	in the $N=4$ state, followed by
	measurement at the time which optimises on-resonance
	sensitivity ($t_1 \simeq 0.12 /\gamma$).
	}
    \label{fig_bw1}
\end{figure*}

Estimating all of the parameters affecting
a signal,
as in waveform estimation, is at least
as hard as estimating some parameters with others
viewed as `nuisance parameters.' Consequently,
we expect similar conclusions about sensitivity
bandwidth to apply to the nuisance
parameter case. However, it is still
useful to see explicitly how estimating
a single parameter, in the large-$\F$ regime,
is affected by nuisance parameters which control
the time dependence of $H'$, as we will analyse in 
this subsection. In particular, we will illustrate
that, even after long times,
the range of different signals for which
near-peak sensitivity can be maintained is
controlled by the short-time QFI growth rate.

Specifically, suppose that $H = H(t,g,h)$ depends on an additional
parameter $h$.
For a given value of $h$, the distinguishability of different
$g$ values is set by the QFI with respect to $g$.
We will investigate the range of different $h$
values over which the QFI growth rate can be close-to-optimal.
This sets an upper bound on the range of $h$ for which 
a given measurement scheme can have near-optimal sensitivity
to $g$, if the measurement scheme is chosen without knowing $h$.

For simplicity, we will suppose that at $g=0$, we have
$\partial_h H = 0$ (i.e., that
$g=0$ corresponds to `no forcing'). Henceforth, we will write $\partial \equiv \partial_h$, and equalities holding at $g=0$
will be notated via $\overset{0}{=}$,
e.g., $\partial H \eqg 0$.\footnote{More
generally, we could set $\partial H = 0$
at $g = g_0$ for any $g_0$; the choice
of $g=0$ is arbitrary but simplifies the presentation.}
As an example, if $g$ corresponds to the overall strength
of the forcing, and $h$ controls its precise
form, then at $g=0$ the $h$ parameter should have no effect. 
We will assume that $\rho$ is $g$- and $h$- independent at
some initial time $t_0$.
Taking the derivative of the master equation (Eq.~\eqref{eqsme}) with respect to $g$,
we have
\begin{equation}
	\dot \rho' = - i [H',\rho]
	- i [H,\rho'] + \sum_j
	\left(L_j \rho' L_j^\dagger - \frac{1}{2}\{
		L_j^\dagger L_j,\rho'\}\right)
	\label{eqsmeprime}
\end{equation}
and then taking the derivative of this with
respect to $h$,
\begin{equation}
	\partial \dot \rho' \eqg - i [\partial H',
	\rho] - i[H,\partial \rho'] +
	\sum_j\left(L_j \partial\rho' L_j^\dagger - \frac{1}{2}\{
		L_j^\dagger L_j,\partial\rho'\}\right)
\end{equation}
since $\partial H \eqg 0$ and consequently $\partial \rho \eqg 0$. 
This has the same form as Eq.~\eqref{eqsmeprime}, which suggests that we can use the bounds derived in Section~\ref{secqfi}, replacing $H'$ with $\partial H'$ and $\rho'$ with $\partial \rho'$.
More precisely, consider the state
$\sigma(t,g,h)$, defined via the master equation
\begin{equation}
	\dot \sigma = - i [H^{(\sigma)},\sigma]
	+ \sum_j
	\left(L_j \sigma L_j^\dagger - \frac{1}{2}\{
		L_j^\dagger L_j,\sigma\}\right)
\end{equation}
where $H^{(\sigma)}(t,g,h) \coloneqq
H(t,g=0,h) + \partial H(t,g,h)$,
along with the initial condition 
\begin{equation} \sigma(t=t_0) = \rho(t=t_0)
\end{equation} 
for all $g$ and $h$. Noting that $H^{(\sigma)} \eqg H$, we see that $\sigma$ obeys the same master equation as $\rho$ at $g = 0$, implying that $\sigma \eqg \rho$ for all $t \geq t_0$. Moreover, we have
\begin{equation}
	\dot \sigma' \eqg - i [\partial H',
	\sigma] - i[H,\sigma'] +
	\sum_j\left(L_j \sigma' L_j^\dagger - \frac{1}{2}\{
		L_j^\dagger L_j,\sigma'\}\right)
\end{equation}
so $\sigma'$ obeys the same differential equation as $\partial\rho'$ for $g = 0$. From the initial condition, we have $\sigma'(t=t_0) = \partial\rho'(t=t_0) = 0$, so it follows that $\sigma' \eqg \partial\rho'$ for all $t \geq t_0$.
Therefore,
$\partial \L \eqg \L^{(\sigma)}$
where $\L$ is the SLD for $\rho$ and $\L^{(\sigma)}$ is the SLD for $\sigma$ (cf.~Eq.~\eqref{eq_ldefn}).
These identities will allow us to use the bounds in
Section~\ref{secqfi}, applied
to $\F^{(\sigma)} \coloneqq \tr (\sigma (\L^{(\sigma)})^2)$,
to make statements about $\F = \tr (\rho \L^2)$.

We now specialise to a scalar time dependence
of the form $H = g (f_0(t) + h f_1 (t)) \widetilde H + H_c$,
where $\widetilde{H}$ is independent of $t$, $g$, and $h$,
and $H_c$ is independent of $g$ and $h$.
Then, the basic quantities are linear in $h$, so we can write
$H' = H'_0 + h \partial H'$,
$\rho' = \rho'_0 + h \partial \rho'$ (from Eq.~\eqref{eqsmeprime}),
and $\L = \L_0 + h \partial \L$ (since $\L$ is linear in $\rho'$), where zero subscripts indicate the values at $h =0$.
Hence, from Eq.~\eqref{eqfqdot}, 
$\dot \F$ is quadratic in $h$, with the $h^2$ term
given by
\begin{equation}
	\partial^2 \dot \F \overset{0}{=} 2 i \tr
	(\rho [\partial H',\partial \L])
	- \sum_j \tr(\rho [\partial \L,L_j]^\dagger
	[\partial \L,L_j]).
\end{equation}
Since $\partial \L \eqg \L^{(\sigma)}$
and $\partial H' = (H^{(\sigma)})'$, 
we have $\partial^2 \dot \F \eqg \dot \F^{(\sigma)}$ by Eq.~\eqref{eqfqdot}.

Supposing that $\wH$ is in the Lindblad span,
and we can bound $\dot \F$ 
as in Eq.~\eqref{eq:fqdothls},
then the large-$\F$ bound
will scale
with $h$ as 
$\dot \F_{\rm max} \propto (f_0 + h f_1)^2$
(for any $t$).
Consequently, we will suppose
that $\dot \F \le (f_0 + h f_1)^2 \dot \F_{b}$
for some time-independent $\dot \F_b$.
We are interested in quantifying the range
of different time dependences for
which we can (approximately) attain
this bound, using the same detection system.
To do this, we can assume that $\dot \F$
attains the bound for $h=0$,
and ask what is required for $\dot \F$ at
positive and negative $h$ values to also
(almost) attain the bound.
Equating the $h^2$ terms in $\dot \F$
and $\dot \F_{\rm max}$, we must have
	$\partial^2 \dot \F \simeq f_1^2 \dot \F_b$.

Since $\partial^2 \dot \F \eqg \dot \F^{(\sigma)}$, 
then if $f_1(t) = 0$ for $t$ less than some time
$t_f$, we can use the finite-time QFI
bounds derived in Section~\ref{secqfi}.
Taking the simplest example of a step function
time dependence, i.e., $f_1 = \Theta(t - t_f)$,
we have from Eqs.~\eqref{eq:fqdothls} and \eqref{eqfqt}
that
the time taken to attain 
$\dot \F^{(\sigma)} \simeq f_1^2 \dot \F_b$
at $g = 0$ is at least
$t_c \sim {\dot \F_b}/{\F_{\widetilde H,\rm max}}$,
where $\F_{\wH,\rm max}$ is a bound on $\F_{\wH}$.

From the discussion below Eq.~\eqref{eqlim0},
to saturate the large-$\F$ bound $\dot \F_{\rm max}$,
we need
$[L_j,\L] \sqrt{\rho} = 2 \widetilde A_j \sqrt{\rho}$
for all $j$. In our case,
since $\widetilde A_j \propto (f_0 + h f_1)$,
we need different values of $\L$ for different
$h$ values, if $f_1 \neq 0$. 
For the scenario where $f_1 = 0$ for $t < t_f$,
but changes suddenly at $t_f$,
$\dot \F_{\rm max}$
will also change suddenly (for non-zero $h$)
but $\L$ will take some time to change,
so the system will take some time to attain
the new $\dot \F_{\rm max}$. This is what
the analysis above quantifies.

As expected, $t_c$ is the same parametric timescale
we found for waveform estimation.
However, the interpretation is slightly different. For different values of $h$, the evolution of the state will be different, and in particular, the optimal measurement (with classical
Fisher information equal to the QFI) 
may be different. Consequently, 
attaining $\dot \F$ close to the bound
is necessary, but potentially not sufficient, for there to be a single measurement scheme with close-to-optimal
sensitivity for a range of $h$ values.
However, as demonstrated by e.g., 
the explicit prepare-measure-reset
scheme discussed below (Section~\ref{sec_prepare}),
it is generally possible to obtain
a sensitivity bandwidth scaling
as $t_c^{-1}$, though the numerical prefactors may differ.

For simple systems, we can explicitly
analyse both the QFI growth and the
performance of specific measurement
schemes. 
An example is the detection of a
near-resonant force acting
on a damped harmonic oscillator.
As we found in Section~\ref{secqho},
the bound $\dot \F \le 4 |\epsilon|^2 / \gamma$
can be attained
by an oscillator in a coherent state, for 
an on-resonance forcing.
However, as illustrated in the left-hand panel
of Figure~\ref{fig_bw1}, 
the $\dot \F$ value obtained for
forcings of different frequencies falls off quickly 
for detunings $\delta \omega \gtrsim \gamma$, for
an oscillator in a coherent state.
This fall-off can be reduced by using states
with larger $\hat x$ fluctuations, such as squeezed coherent states.
The figure shows that for squeezing parameter
$G_s$, corresponding to fluctuations with variance $G_s$
times larger than that in a coherent state,
the bandwidth over which 
the $\dot \F$ bound is attained (to $\OO(1)$) is increased
by a factor $\sim G_s$.

This behaviour, shown in the left-hand panel of 
Figure~\ref{fig_bw1}, corresponds
to the QFI growth analysis in this subsection,
and puts an upper bound on how well any given measurement
scheme can do. By analysing specific measurement
schemes, we can see that this scaling can
be attained (again, to $\OO(1)$).
In the right-hand panel of Figure~\ref{fig_bw1},
we show the sensitivity of
a prepare-measure-reset scheme
as discussed in Section~\ref{sec_prepare}
(using Fock states of the oscillator),
illustrating how states with larger
$H'$ fluctuations can result
in larger sensitivity bandwidths.
In~\cite{konrad}, the sensitivity of linear amplification
schemes with squeezed coherent states is analysed, 
with analogous results.

In summary, the analysis in this section
shows that the $\sim t_c^{-1}$ scaling for the 
sensitivity bandwidth is the best achievable
(even when we do not need to determine the nuisance
parameters),
while the analyses in Section~\ref{sec_prepare} show that this scaling can
be attained. Our analysis here was rather schematic (and it would be interesting to treat
this more precisely) but captures the parametrics involved,
showing how the range of different time dependences that 
we can be sensitive to is
set by the finite-time QFI growth rate. 

\subsection{Frequent measurements}
\label{sec_prepare}

One way to increase our sensitivity bandwidth
(potentially at the expense of peak sensitivity)
is to use a prepare-measure-reset procedure:
we sequentially prepare our probe system in some known
starting state, allow it to evolve for some 
small time $t_1$, and then measure (before resetting).
If our signal changes slowly compared to $t_1$,
then it will be approximately constant over each
prepare-measure-reset
cycle, and we will have similar sensitivities to
signals with different rates of change.
Consequently, the sensitivity bandwidth
will be at least $\sim 1/t_1$.

Since the QFI bounds the classical Fisher information
from any measurement, the total information
we can obtain is bounded by the sum of the QFIs
from each cycle. Hence,
the appropriate figure of merit is
$\F(t_1)/t_1$, where $\F(t_1)$ is
the QFI at each measurement time.
This can be bounded using the techniques
from Section~\ref{secqfi}.
In particular, if we can treat
$H'$ as constant over each cycle,
then to obtain $\F(t_1)/t_1 \gtrsim \dot \F_c$,
for some $\dot \F_c$, we need
$t_1 \gtrsim \dot \F_c/\F_{H',\rm max}$,
where $\F_{H',\rm max}$ is a bound
on $\F_{H',\rm max}$.
So, if $1/t_1$ sets the sensitivity bandwidth,
then $\Delta \omega \lesssim \F_{H',\rm max}/\dot \F_c$.

A practical example of this kind of prepare-measure-reset
scheme is the proposal to use
Fock states of cavity modes for axion dark matter
detection~\cite{10.1007/978-3-030-31593-1_5,2008.12231}.
If we prepare an oscillator
in a Fock state $|n\rangle$,
then in the presence of a small (linear) forcing,
the rates of $|n\rangle \rightarrow |n+1\rangle$
and $\ket{n}\rightarrow \ket{n-1}$
transitions are $\propto n+1$ and $n$, respectively. Consequently,
$\F$ initially increases faster with larger $n$.
However, the rate of damping-induced
$|n\rangle \rightarrow |n-1\rangle$ transitions
is also $\propto n$,
so the growth of $\F$ slows down over a
timescale $\sim \gamma / n$.
This corresponds to the fact that,
for given $\gamma$,
we cannot violate the $\dot \F \le 4 |\epsilon|^2 / \gamma$
bound. However,
the faster initial growth rate
increases the sensitivity bandwidth;
quantitatively, $\langle H'^2 \rangle \propto 2n+1$,
and $\F \sim (8n + 4)(\gamma t)^2$ for small $t$.
These points are illustrated by the dashed blue
curve in the bottom panels of
Figure~\ref{fig_fq1}, which corresponds
to preparing the oscillator in
an $n=2$ Fock state.

Experimentally, one usually imagines measuring in
the Fock basis. This would not be the optimal basis
in which to measure if we knew the phase of our signal;
however, it is a simple and phase-independent choice
(and for many signals, such as virialised axion dark matter, the
phase will be unknown). For Fock basis measurements,
the time-averaged rate of Fisher information gain
is smaller than the time-averaged QFI growth rate,
 but only by an order-1 factor.
Similarly, prepare-measure-reset schemes
have smaller time-averaged QFI growth
rate than is possible with
`steady-state' schemes that
do not involve resets (assuming equivalent
mean occupation number), but this is again
an order-1 loss.
Overall, as illustrated in the right-hand
panel of Figure~\ref{fig_bw1}, one can
still use non-classical states to obtain
an enhanced sensitivity bandwidth 
using a prepare-measure-reset scheme despite these issues.

While we have mostly discussed the bandwidth over
which it is possible to achieve
near-peak sensitivity, one is often interested
in other figures of merit, such
as some kind of frequency-averaged sensitivity.
For example, if we are searching for a signal
of definite but unknown frequency, then
the appropriate quantity might be the time-averaged
$\dot \F$, averaged over the relevant frequency range.
For the prepare-measure-reset procedure above,
this is bounded
by $\sim \dot \F \times {\F_{H'}}/{\dot \F}
\sim \F_{H'}$. We obtain
parametrically the same result for steady-state schemes.
These relationships are analogous to
`Energetic Quantum Limits' derived
in the gravitational wave literature~\cite{10.1063/1.1291855,1903.09378}, which relate the frequency-averaged
sensitivity to the quantum fluctuations
of an optical mode's energy (since the energy is
the appropriate interaction operator for
a gravitational wave coupling).

\section{Conclusions}

In this paper, we explored the
problem of Hamiltonian parameter
estimation for quantum systems
subject to Markovian noise.
In particular, we derived an upper bound
(Eq.~\eqref{eqlim1}) on
the rate of increase of the quantum Fisher information,
in terms of the
Lindblad operators and the parameter derivative $H'$
of the Hamiltonian.
This bound tightens previous bounds obtained
in the case of time-independent 
$H'$ for a finite-dimensional system~\cite{prx,preskill},
and also applies directly to
more general situations, such as time-dependent
master equations, and/or infinite-dimensional systems.

For time-dependent signals, we showed
that the range of different frequencies
to which a system can have close-to-peak sensitivity
is set by the finite-time QFI growth rate,
which is bounded by the quantum fluctuations
of $H'$. 
This is true even in the large-time,
large-QFI regime, illustrating how our bounds
can be useful beyond simply studying 
early-time QFI growth.
While many previous studies have focused
on how non-classical states
can lead to higher sensitivities to specific
signals, our analysis illustrates
another way in which they can provide
metrological advantage---by expanding the sensitivity
bandwidth. This applies even in cases where
such states cannot improve
a system's peak sensitivity,
such as in the example we analysed of a damped
harmonic oscillator with linear forcing.

While such behaviour has been noted
for particular schemes, such as
force detection with squeezed coherent
states~\cite{konrad}, we demonstrated
that it applies more generally.
It may be interesting to investigate
sensitivity bandwidth expansion in gravitational
wave detection schemes using our methods;
while there have been general metrological analyses
of interferometric gravitational wave detection
in the presence of noise (in particular, photon loss)~\cite{10.1103/PhysRevA.88.041802},
these have generally focused on
the sensitivity to a specific signal.

As noted throughout Section~\ref{secqfi},
our bounds can be attained asymptotically (in the large-time limit)
using the error-correction methods
of~\cite{preskill,10.1103/PhysRevResearch.2.013235}
(at least for time-independent $H'$).
However, it is not clear whether they can always
be attained at finite times by using appropriate
ancilla-assisted schemes.
While we showed that this was possible in some
cases, such as the damped oscillator considered
in Section~\ref{secqho}, we leave the
general question to future work.

Another direction in which our results could
be extended is by considering multi-parameter
estimation problems. While Section~\ref{sec_bw}
considered very simple examples of multi-parameter
estimation problems, e.g., separate 
parameters controlling separate time intervals,
more general scenarios (such as those
analysed in~\cite{10.22331/q-2020-07-02-288}) could be investigated.

In addition to systems whose time evolution is
well-described by a Lindblad master equation, our methods
could similarly be applied to quantum channels
that are equivalent
to evolution over a finite time under some Lindblad master equation,
even if that master equation does not describe the continuous-time
evolution of the system.
This may be useful in analysing the QFI for states
obtained
from discrete applications of quantum
channels, as considered in e.g.,~\cite{10.1103/PRXQuantum.2.010343}.

\tocless\acknowledgments{We thank Masha Baryakhtar, Konrad Lehnert and Sisi Zhou for helpful
conversations. KW is supported by the Stanford Graduate Fellowship. RL's research is supported in part by the National Science Foundation under Grant No.~PHYS-2014215, and the Gordon and Betty Moore Foundation Grant GBMF7946.
RL thanks the Caltech physics department for hospitality
during the completion of this work.}

\appendix

\section{Differentiability of QFI wrt time}
\label{appL}

To derive the bounds in Section~\ref{secqfi}, we assumed that
it is always possible to find a Hermitian operator $\L(t,g)$ such
that
\begin{equation}
	\rho'(t,g) = \frac{1}{2}(\rho(t,g) \L(t,g) +
	\L(t,g)\rho(t,g))
	\label{eqrhoprime}
\end{equation} 
In this appendix, we show
that for any $\rho$ obeying a Lindblad master equation in which the operators are differentiable with respect to $g$, we can always find such an $\L$.\footnote{More generally,  this holds for any $\rho$ whose evolution is described by a quantum channel that is differentiable with respect to $g$; see Proposition~\ref{claim:rho'jk}.}  
We also prove that if $\rho$ is an analytic function
of $t$, then $\L$ is differentiable
with respect to $t$, except possibly at a
set of isolated times (for given $g$). We then show that $\F$ must be continuous
even at these isolated times. 

For a given $t$ and $g$, let $\rho = \sum_j p_j \ket{j}\bra{j}$ be a spectral decomposition of $\rho$.\footnote{We assume
that we can index the spectrum of $\rho$,
leaving the analysis for continuous
spectra to future work.} 
Then, Eq.~\eqref{eqrhoprime} implies that for all $j,k$,
\begin{equation} 
\bra{j}\rho'\ket{k} = \frac{1}{2}(p_j + p_k)\bra{j}\L\ket{k}. 
\end{equation}
Hence, we set 
\begin{equation}
    \bra{j}\L\ket{k} = \frac{2\bra{j}\rho'\ket{k}}{p_j + p_k}
	\label{eqLjk}
\end{equation}
for all $(j,k)$ such that $p_j + p_k \neq 0$, while for $(j,k)$ such that $p_j = p_k = 0$, we choose any arbitrary values for $\bra{j}\L\ket{k}$ (that are consistent with Hermiticity). Clearly, this $\L$ will satisfy Eq.~\eqref{eqrhoprime} provided that $\bra{j}\rho'\ket{k} = 0$ for all $(j,k)$ such that $p_j = p_k = 0$. The following proposition shows that if this condition on $\rho'$ is satisfied at some initial time, then it is satisfied at all subsequent times.

\begin{prop} \label{claim:rho'jk}
Suppose that for some $t_0, t_1$ such that $t_0 < t_1$, $\rho(t_1,g)$ is obtained from $\rho(t_0,g)$ via a quantum channel:
\begin{equation} \label{eq:channel}
\rho(t_1,g) = \sum_m M_m(g)\rho(t_0,g) M_m^\dagger(g),
\end{equation}
where the Kraus operators $M_m$ are differentiable with respect to $g$. For some fixed value of $g$, let $\rho(t_0,g) = \sum_j p_j \ket{j}\bra{j}$ and $\rho(t_1,g) = \sum_J P_J \ket{J}\bra{J}$ be spectral decompositions of $\rho(t_0,g)$ and $\rho(t_1,g)$. If $\bra{j}\rho'(t_0,g)\ket{k} = 0$ for all $(j,k)$ such that $p_j = p_k = 0$, then $\bra{J}\rho'(t_1,g)\ket{K} = 0$ for all $(J,K)$ such that $P_J = P_K = 0$. 
\end{prop}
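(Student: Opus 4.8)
The plan is to recast both the hypothesis and the conclusion as identities involving the projectors onto the kernels of $\rho(t_0,g)$ and $\rho(t_1,g)$. Writing $\Pi_0$ and $\Pi_1$ for the projectors onto $\ker\rho(t_0,g)$ and $\ker\rho(t_1,g)$, the stated condition $\bra{j}\rho'(t_0,g)\ket{k}=0$ for all $(j,k)$ with $p_j=p_k=0$ is precisely $\Pi_0\rho'(t_0,g)\Pi_0=0$, since the $\ket{j}$ with $p_j=0$ form an orthonormal basis of $\ker\rho(t_0,g)$; likewise the desired conclusion is $\Pi_1\rho'(t_1,g)\Pi_1=0$. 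So it suffices to establish this projector identity at $t_1$ from the one at $t_0$.

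The crucial structural step I would prove first is that the channel pushes $\ker\rho(t_1,g)$ backward into $\ker\rho(t_0,g)$: for any $\ket{J}\in\ker\rho(t_1,g)$ and every $m$, one has $M_m^\dagger\ket{J}\in\ker\rho(t_0,g)$. This follows from positivity. Since $\rho(t_1,g)=\sum_m M_m\rho(t_0,g)M_m^\dagger$ is a sum of positive semidefinite operators, $\ket{J}\in\ker\rho(t_1,g)$ gives
\[
0=\bra{J}\rho(t_1,g)\ket{J}=\sum_m\bra{J}M_m\rho(t_0,g)M_m^\dagger\ket{J}=\sum_m\big\lVert\rho(t_0,g)^{1/2}M_m^\dagger\ket{J}\big\rVert^2,
\]
so each summand vanishes, forcing $\rho(t_0,g)M_m^\dagger\ket{J}=0$ for every $m$.

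With this in hand, I would differentiate the channel relation with respect to $g$,
\[
\rho'(t_1,g)=\sum_m\big(M_m'\,\rho(t_0,g)\,M_m^\dagger+M_m\,\rho'(t_0,g)\,M_m^\dagger+M_m\,\rho(t_0,g)\,M_m'^\dagger\big),
\]
and evaluate $\bra{J}\rho'(t_1,g)\ket{K}$ for $\ket{J},\ket{K}\in\ker\rho(t_1,g)$ term by term. In the first term, $\rho(t_0,g)M_m^\dagger\ket{K}=0$ by the structural step, so it vanishes; the third term vanishes by the Hermitian-conjugate version of the same observation applied to $\ket{J}$, using that $\rho(t_0,g)$ is Hermitian so that $\bra{J}M_m\rho(t_0,g)=\big(\rho(t_0,g)M_m^\dagger\ket{J}\big)^\dagger=0$. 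For the middle term, setting $\ket{a}=M_m^\dagger\ket{J}$ and $\ket{b}=M_m^\dagger\ket{K}$, both lie in $\ker\rho(t_0,g)$, so $\bra{a}\rho'(t_0,g)\ket{b}=0$ by the hypothesis in its projector form $\Pi_0\rho'(t_0,g)\Pi_0=0$. Hence every term vanishes and $\bra{J}\rho'(t_1,g)\ket{K}=0$, which is the claim.

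I expect the main obstacle to be the positivity argument that pushes the kernel backward through the channel—recognizing that $\bra{J}\rho(t_1,g)\ket{J}=0$ forces $M_m^\dagger\ket{J}$ into $\ker\rho(t_0,g)$ for each $m$ \emph{individually}, rather than merely in aggregate. Once that observation is secured, the remaining steps are a direct term-by-term check, and the only other point requiring (routine) care is the translation between the eigenvector formulation in the statement and the projector identity $\Pi_0\rho'(t_0,g)\Pi_0=0$.
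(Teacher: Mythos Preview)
Your proof is correct and follows essentially the same approach as the paper's: both establish that $M_m^\dagger$ maps $\ker\rho(t_1,g)$ into $\ker\rho(t_0,g)$ via positivity, then differentiate the channel relation and check the three terms. The only difference is presentational---you work basis-free with projectors and $\rho^{1/2}$, whereas the paper inserts the eigenbasis resolution of the identity and tracks matrix elements $\bra{J}M_m\ket{j}$ explicitly; your formulation is slightly cleaner but the mathematical content is identical.
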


\begin{proof} For clarity, we will denote $\rho_0 \equiv \rho(t_0,g)$ and $\rho_1 \equiv \rho(t_1,g)$, and
omit the $t$ and $g$ arguments of all operators. First, observe that for any $J$ such that $P_J = 0$, 
\begin{align}
    0 &=\bra{J}\rho_1 \ket{J} \nonumber \\
    &= \sum_m \bra{J}  M_m \rho_0 M_m^\dagger \ket{J} \nonumber \\
    &= \sum_{j: p_j \neq 0} p_j \sum_m \left|\bra{J} M_m \ket{j} \right|^2. \label{rhojj}
\end{align}
This implies that 
\begin{equation} \label{eq:MJj}
\bra{J}M_m\ket{j} = 0
\end{equation}
for all $m$ for any $(j,J)$ such that $p_j \neq 0$ and $P_J = 0$. 

Differentiating Eq.~\eqref{eq:channel} with respect to $g$, we obtain
\begin{align}
    &\bra{J}\rho_1'\ket{K} \nonumber \\ &= \bra{J}\sum_m\Big(M_m' \rho_0 M_m^\dagger + M_m \rho_0 M_m^{\prime\dagger}  + M_m \rho_0' M_m^\dagger \Big)\ket{K} \nonumber \\
    &= \sum_m \Bigg[\sum_{j:p_j \neq 0} p_j \Big(\bra{J}M_m'\ket{j}\bra{j}M_m^\dagger \ket{K} \nonumber \\
    &\qquad \qquad\qquad \enspace + \bra{J}M_m\ket{j}\bra{j}M_m^{\prime\dagger}\ket{K}\Big) \nonumber\\
    &\qquad\quad + \sum_{j,k} \bra{J}M_m \ket{j}\bra{j}\rho_0'\ket{k}\bra{k}M_m^\dagger\ket{K} \Bigg]. \label{eq:prop1sum}
\end{align}
Consider any $(J,K)$ such that $P_J = P_K = 0$. Then, the first two terms in Eq.~\eqref{eq:prop1sum} 
vanish by the result expressed in Eq.~\eqref{eq:MJj}. For the third term, if $(j,k)$ is such that $p_j = p_k = 0$, then $\bra{j}\rho_0'\ket{k} = 0$ 
by assumption; otherwise, at least one of $\bra{J}M_m\ket{j}$ or $\bra{k}M_m^\dagger\ket{K}$ 
is zero by Eq.~\eqref{eq:MJj}. Thus, the third term also vanishes. Therefore, $\bra{J}\rho_1'\ket{K} = 0$ as claimed.

\end{proof}

Evolution via a Lindblad master equation
allows us to write $\rho(t_1,g)$ as a quantum
channel on $\rho(t_0,g)$ for any $t_1 > t_0$,
fulfilling the assumption in Eq.~\eqref{eq:channel} of Proposition~\ref{claim:rho'jk}.
Consequently, if we can find $\L(t_0,g)$ at some
initial time $t_0$ (e.g., if
we prepare our system in a $g$-independent state, so $\rho'(t_0,g) = 0$ at the start time $t_0$), then
we can always find $\L(t,g)$ satisfying Eq.~\eqref{eqrhoprime}
at all subsequent times $t$, using Eq.~\eqref{eqLjk}.

For our analyses in Section~\ref{secqfi}, we
also assumed that $\L$ is differentiable with
respect to $t$, except at a
(possibly empty) set of isolated times. Writing
$\L = \sum_{j,k} \L_{jk}|j\rangle \langle k |$,
a sufficient condition is that $\L_{jk}, |j\rangle$
and $|k\rangle$ are differentiable with respect to $t$
for all $j,k$. 

If we consider arbitrary quantum channels,
then it is simple to write down a channel
such that $\L$ is \emph{not} differentiable with respect to $t$,
so we need to impose more conditions.
Here, we analyse, as an example, the simple case where
$\rho(t,g)$ is an analytic function of $t$ (for
each $g$); this often serves as a good model for
physical systems (e.g.,
if $\rho$ arises from a master equation whose
operators are analytic functions of $t$,
as in Section~\ref{secqho}).
In this case, the eigenvalues $p_j(t)$
and eigenstates $\ket{j(t)}$ can be
chosen to be
analytic functions of $t$~\cite{kato}.\footnote{This result applies to finite-dimensional systems; 
we
defer a careful analysis of the differentiability assumption for infinite-dimensional
systems to future work.}
Hence, it remains to show that
$\L_{jk}$ is differentiable with respect to $t$.

For given $(j,k)$, at any $t$ such that $p_j + p_k \neq 0$, Eq.~\eqref{eqLjk}
is clearly differentiable, since
the $p_j$ and $\ket{j}$ are differentiable.
Moreover, by analyticity, the set
of points with $p_j + p_k = 0$ (for some given $g$) is either the entire $t$ range, or a set
of isolated points. In the latter case, we can split
our $\L$ evolution into differentiable segments
between these points.
If $\F$ is continuous at these
points, then $\F(t)$ for all $t$ can be obtained by
integrating $\dot \F$ in the segments
between the points.
To show that $\F$ is continuous, we prove
a strengthened version of Proposition~\ref{claim:rho'jk}.

\begin{prop} \label{claim:rho'jk2}
Suppose that for some $t_0, t_1$ such that $t_0 < t_1$, $\rho(t,g)$ is obtained from $\rho(t_0,g)$ via a quantum channel, for all $t$ in a neighbourhood $I_1$ of $t_1$:
\begin{equation} \label{channel2}
\rho(t,g) = \sum_m M_m(t,g) \rho(t_0,g) M_m^\dagger(t,g),
\end{equation}
where the Kraus operators $M_m(t,g)$ are differentiable with respect to $g$ and $M_m'(t,g)$ are continuous in $t$. For some fixed value of $g$, let $\rho(t,g) = \sum_j p_j(t) \ket{j(t)}\bra{j(t)}$ 
denote the spectral decomposition of $\rho(t,g)$ for $t \in I_1$. Assume that $\bra{j(t_0)}\rho'(t_0,g)\ket{k(t_0)} = 0$ for all $(j,k)$ such that $p_j(t_0) = p_k(t_0) = 0$. If for some $(j,k)$, we have $p_j(t) + p_k(t) = \mathcal{O}(f(t - t_1)^2)$ (for $|t - t_1|$ small) 
for some function $f = o(1)$,\footnote{since
we are interested in $t_1$ such that
$p_j(t_1) + p_k(t_1) = 0$} then $\bra{j(t)}\rho'(t,g)\ket{k(t)} = \mathcal{O}(f(t-t_1))$.
\end{prop}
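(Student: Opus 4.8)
The plan is to prove a quantitative version of Proposition~\ref{claim:rho'jk}: whereas there we showed that a matrix element of $\rho'$ vanishes exactly when both eigenvalues vanish, here I would track how fast it vanishes as the eigenvalues approach zero. This is precisely what is needed to conclude that $\F$ stays continuous at the isolated singular times, since (combining Eq.~\eqref{eqLjk} with $\F = \tr(\rho\L^2)$ and symmetrizing in $j,k$) one has $\F = 2\sum_{j,k} |\bra{j}\rho'\ket{k}|^2/(p_j + p_k)$, so a numerator of order $f(t-t_1)^2$ over a denominator of the same order keeps each term bounded as $t \to t_1$. Throughout I would work at fixed $g$, abbreviate $f \equiv f(t-t_1)$, write $\rho_0 \equiv \rho(t_0,g)$ with eigendecomposition $\rho_0 = \sum_l p_l^{(0)} \ket{l}\bra{l}$, and use that the nonzero $p_l^{(0)}$ are bounded below by a positive constant (finite dimension).

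First I would convert smallness of the eigenvalues into smallness of Kraus overlaps. Since eigenvalues are nonnegative and $p_j + p_k = \mathcal{O}(f^2)$, both $p_j(t), p_k(t) = \mathcal{O}(f^2)$ individually. Evaluating $p_j(t) = \bra{j(t)}\rho(t)\ket{j(t)} = \sum_{l: p_l^{(0)}\neq 0} p_l^{(0)} \sum_m |\bra{j(t)}M_m\ket{l}|^2$ as a sum of nonnegative terms yields, for each $l$ with $p_l^{(0)} \neq 0$, the key estimate
\begin{equation}
\sum_m |\bra{j(t)}M_m\ket{l}|^2 \;\le\; \frac{p_j(t)}{p_l^{(0)}} \;=\; \mathcal{O}(f^2),
\end{equation}
and similarly with $j$ replaced by $k$. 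Taking square roots gives the single power of $f$ that will propagate through the rest of the argument.

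Next I would differentiate the channel representation~\eqref{channel2} with respect to $g$ to write $\bra{j(t)}\rho'(t)\ket{k(t)}$ as the same three families of terms as in Eq.~\eqref{eq:prop1sum}, and bound each by Cauchy--Schwarz in the Kraus index $m$. For the $M_m'$-term (in which $l$ necessarily has $p_l^{(0)}\neq0$, since $\rho_0$ appears directly), $|\sum_m \bra{j}M_m'\ket{l}\bra{l}M_m^\dagger\ket{k}| \le \left(\sum_m |\bra{j}M_m'\ket{l}|^2\right)^{1/2}\left(\sum_m|\bra{k}M_m\ket{l}|^2\right)^{1/2}$, where the second factor is $\mathcal{O}(f)$ by the estimate above and the first is bounded by $\|\sum_m M_m'^\dagger M_m'\|^{1/2}$; the $M_m'^\dagger$-term is handled symmetrically. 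For the $\rho_0'$-term I would split on $(l,l')$: if $p_l^{(0)} = p_{l'}^{(0)} = 0$ the factor $\bra{l}\rho_0'\ket{l'}$ vanishes by hypothesis (exactly as in Proposition~\ref{claim:rho'jk}), and otherwise at least one of the overlaps $\bra{j}M_m\ket{l}$, $\bra{k}M_m\ket{l'}$ is $\mathcal{O}(f)$ while its Cauchy--Schwarz partner is bounded by $1$ using $\sum_m M_m^\dagger M_m = I$. Summing the finitely many $(l,l')$ contributions keeps the bound $\mathcal{O}(f)$.

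The main obstacle is to control the sum over the Kraus index $m$, whose range need not be a fixed finite count, without losing the power-counting in $f$. The device that resolves this is to keep the $m$-sum inside a single Cauchy--Schwarz factor $\sum_m |\bra{j}M_m\ket{l}|^2$ rather than bounding individual overlaps and multiplying by the number of terms; trace preservation ($\sum_m M_m^\dagger M_m = I$) then bounds the ``generic'' factors by $1$, and the eigenvalue estimate bounds the ``small'' factor by $\mathcal{O}(f)$. The only genuine regularity input beyond Proposition~\ref{claim:rho'jk} is boundedness of $\sum_m M_m'^\dagger M_m'$, which holds for the differentiable, finite-dimensional channels at issue; granting this, combining the three bounds gives $\bra{j(t)}\rho'(t)\ket{k(t)} = \mathcal{O}(f(t-t_1))$.
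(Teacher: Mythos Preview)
Your proposal is correct and follows essentially the same architecture as the paper's proof: both expand $p_J(t)$ via the channel to extract smallness of the Kraus overlaps with nonzero-eigenvalue initial eigenvectors, then differentiate the channel and bound the same three families of terms using the initial hypothesis, the overlap estimate, and trace preservation. The one substantive difference is that the paper bounds each individual matrix element $(M_m)_{Jj}(t) = \mathcal{O}(f(t-t_1))$ and $(M_m')_{Jj}(t) = O(1)$ and sums, implicitly assuming finitely many Kraus operators, whereas you keep the $m$-sum inside a Cauchy--Schwarz factor and control it via $\sum_m M_m^\dagger M_m = I$ and boundedness of $\sum_m M_m'^{\dagger} M_m'$; this is a minor but genuine gain in rigor when the Kraus index set is not \emph{a priori} finite, at the cost of the extra regularity assumption you flag.
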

\begin{proof} The proof is similar to that of Proposition~\ref{claim:rho'jk}. For convenience, we will sometimes omit the argument $g$, 
and we denote $\rho_0 \equiv \rho(t_0,g)$ and $(A)_{Jj}(t) \equiv \bra{J(t)}A(t)\ket{j(t_0)}$ 
for any operator $A$ and indices $J,j$ (ranging over the eigenbasis of $\rho$).

Using the same argument that led to Eq.~\eqref{rhojj}, we have that for any $J$,
\begin{equation}
    p_J(t) = \sum_{j:p_j(t_0) \neq 0} p_j(t_0) \sum_m \left|(M_m)_{Jj}(t)\right|^2
\end{equation}
for all $t \in I_1$. Since $p_j(t_0)$ are non-negative constants independent of $t$, we see that if $p_J((t) = \mathcal{O}(f(t-t_1)^2)$, then 
\begin{equation} \label{MmJjt}
(M_m)_{Jj}(t) = \mathcal{O}(f(t-t_1))
\end{equation} for all $j$ such that $p_j(t_0) \neq 0$. 

	Also, it follows from $\sum_m M_m^\dagger M_m = I$~\cite{nc} that $\sum_m \sum_j |(M_m)_{Jj}(t)|^2 = 1$, so 
\begin{equation} \label{MmJjt1} |(M_m)_{Jj}(t)| \leq 1
\end{equation} for all $m, J,j$.

Differentiating Eq.~\eqref{channel2} with respect to $g$, we have for any $J,K$,
\begin{align}
    &\bra{J(t)}\rho'(t,g)\ket{K(t)} \nonumber \\
    &= \sum_m \Bigg[\sum_{j:p_j(t_0) \neq 0} p_j \Big((M_m')_{Jj}(t)(M_m^\dagger)_{jK}(t) \nonumber \\
    &\qquad \qquad\qquad \enspace + (M_m\ket{j})_{Jj}(M_m^{\prime\dagger})_{jK}(t)\Big) \nonumber\\
    &\qquad\quad + \sum_{j,k} (M_m)_{Jj}(t)\bra{j(t_0)}\rho_0'\ket{k(t_0)}(M_m^\dagger)_{kK}(t) \Bigg]. \label{eq:prop2sum}   
\end{align}
Since $M_m'$ is continuous in $t$, $(M_m')_{Jj}(t), (M_m'^\dagger)_{jK}(t) = O(1)$ for $|t - t_1|$ small. Hence, for any $(J,K)$ such that $p_J(t) + p_K(t) = \mathcal{O}(f(t-t_1)^2)$, the first two terms in Eq.~\eqref{eq:prop2sum} are both $\mathcal{O}(f(t-t_1))$, by the result expressed in Eq.~\eqref{MmJjt}. As for the third term, if $(j,k)$ is such that $p_j(t_0) = p_k(t_0) = 0$, then $\bra{j(t)}\rho_0'\ket{k(t)} = 0$ by assumption; otherwise, at least one of $(M_m)_{Jj}(t)$ and $(M_m^\dagger)_{kK}(t)$ is $\mathcal{O}(f(t-t_1))$ by Eq.~\eqref{MmJjt}, and the other is $\mathcal{O}(1)$ by Eq.~\eqref{MmJjt1} (and $\bra{j(t_0)}\rho'_0 \ket{k(t_0)}$ is independent of $t$). 

\end{proof}

The QFI is given by
\begin{equation}
	\F = \tr(\rho \L^2) = 2 \sum_{\substack{j,k:\\p_j + p_k \neq 0}}\frac{|\bra{j}\rho'\ket{k}|^2}{p_j + p_k}
	\label{eq_fqi}
\end{equation}
If the denominator $p_j(t) + p_k(t) = \OO(f(t - t_1)^2)$ for some $t_1$ and $(j,k)$, then
Proposition~\ref{claim:rho'jk2} shows that the numerator $|\bra{j(t)}\rho'(t,g)\ket{k(t)}|^2 = \OO(f(t - t_1)^2)$. 
Thus, since $p_j,p_k$ and $\bra j 
\rho' \ket k$ are all analytic in $t$, the Taylor series 
of the numerator and denominator around
$t = t_1$ have leading terms of the same order
in $t - t_1$.
As a result, any apparent singularities in
the RHS terms of Eq.~\eqref{eq_fqi} are removable,
so $\F$ is continuous in $t$.

The continuity of the QFI, with respect to
different variables, has been investigated 
in a number of papers~\cite{10.1103/PhysRevA.94.012339,10.1103/PhysRevA.95.052320,10.1103/PhysRevA.100.032317}; 
as far as we are aware, Proposition~\ref{claim:rho'jk2} constitutes 
a new contribution on this topic, which may be of independent interest.

\section{Properties of $\F_A$}
\label{appqfi}

In this appendix, we review some properties
of the quantum Fisher information $\F_A$ with respect to a Hermitian operator $A$, 
defined by Eqs.~\eqref{eqLG} and~\eqref{eq:FA}. (Note from Eq.~\eqref{eq:FA} that $\F_A$ also depends on the state $\rho$.)

\begin{fact} \label{fact:LA}
For any Hermitian operator $A$ and density operator $\rho$, there exists a Hermitian operator $\L_A$ satisfying Eq.~\eqref{eqLG}.
\end{fact}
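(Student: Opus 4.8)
The plan is to mirror the explicit construction of the SLD given in Appendix~\ref{appL} (cf.~Eq.~\eqref{eqLjk}), now with the role of $\rho'$ played by $i[\rho,A]$. First I would record the elementary observation that the right-hand side of Eq.~\eqref{eqLG} is compatible with $\L_A$ being Hermitian, since $i[\rho,A]$ is itself Hermitian: as $\rho$ and $A$ are Hermitian, $(i[\rho,A])^\dagger = -i(\rho A - A\rho)^\dagger = -i(A\rho - \rho A) = i[\rho,A]$. Next I would fix a spectral decomposition $\rho = \sum_j p_j \ket{j}\bra{j}$ and take matrix elements of Eq.~\eqref{eqLG} in this eigenbasis, reducing the operator equation to the scalar relations $\bra{j}i[\rho,A]\ket{k} = \tfrac{1}{2}(p_j+p_k)\bra{j}\L_A\ket{k}$.

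These scalar relations I would solve exactly as in Eq.~\eqref{eqLjk}: for every pair $(j,k)$ with $p_j + p_k \neq 0$, set $\bra{j}\L_A\ket{k} = 2\bra{j}i[\rho,A]\ket{k}/(p_j+p_k)$, and for pairs with $p_j = p_k = 0$ assign arbitrary values subject only to Hermiticity. The single nontrivial point is the compatibility condition that $\bra{j}i[\rho,A]\ket{k} = 0$ whenever $p_j = p_k = 0$; without it the corresponding component of Eq.~\eqref{eqLG} would read $0 = (\text{nonzero})$ and no solution could exist. In the general SLD existence argument this condition required the work of Proposition~\ref{claim:rho'jk}, so one might anticipate it to be the main obstacle here as well.

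However, the crux of the present argument is that the compatibility condition holds \emph{automatically} from the commutator structure. Computing directly, $\bra{j}i[\rho,A]\ket{k} = i(p_j - p_k)\bra{j}A\ket{k}$, which vanishes whenever $p_j = p_k$---in particular whenever $p_j = p_k = 0$. This makes the Fact strictly simpler than the general SLD construction, with no appeal to Proposition~\ref{claim:rho'jk} needed. It then remains only to confirm Hermiticity of the constructed operator: on the block $p_j + p_k \neq 0$ this follows from $\bra{k}\L_A\ket{j}^* = 2\bra{j}i[\rho,A]\ket{k}/(p_j+p_k) = \bra{j}\L_A\ket{k}$ (using that $i[\rho,A]$ is Hermitian and $p_j+p_k$ is real), while Hermiticity on the complementary block was imposed by hand. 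Existence then follows by explicit construction, with the only real subtlety being the compatibility condition, which is resolved at once by the identity above.
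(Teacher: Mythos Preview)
Your proposal is correct and follows essentially the same approach as the paper: both construct $\L_A$ via the formula $\bra{j}\L_A\ket{k} = 2\bra{j}i[\rho,A]\ket{k}/(p_j+p_k)$ in the eigenbasis of $\rho$, with arbitrary Hermitian entries on the null block, and both note that the compatibility condition $\bra{j}i[\rho,A]\ket{k} = 0$ when $p_j = p_k = 0$ holds automatically. Your write-up is simply more explicit than the paper's, spelling out the computation $\bra{j}i[\rho,A]\ket{k} = i(p_j - p_k)\bra{j}A\ket{k}$ and the Hermiticity check in detail.
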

\begin{proof}
Let $\rho = \sum_j p_j \ket{j}\bra{j}$ be a spectral decomposition of $\rho$. Then, set
\begin{equation} \label{LAjk} \bra{j}\L_A\ket{k} = \frac{2\bra{j}i[\rho,A]\ket{k}}{p_j + p_k}\end{equation}
for all $(j,k)$ for which $p_j + p_k \neq 0$, while for $(j,k)$ such that $p_j = p_k = 0$, choose any value for $\bra{j}\L \ket{k}$ (consistent with Hermiticity). This satisfies Eq.~\eqref{eqLG} since $\bra{j}i[\rho,A]\ket{k} = 0$ for all $(j,k)$ such that $p_j = p_k = 0$.

\end{proof}

\begin{fact} \label{fact:FaA+bI}
For any Hermitian operator $A$, density operator $\rho$, and $a,b\in \mathbb{R}$, we have
\begin{equation} \F_{a A + b I} = a^2 \F_A. \end{equation}
\end{fact}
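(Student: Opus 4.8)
The plan is to exploit the linearity of the defining relation \eqref{eqLG} together with the fact that the identity commutes with $\rho$. First I would note that since $[\rho,I]=0$, we have $i[\rho,aA+bI] = a\, i[\rho,A]$, so the constant shift $b$ drops out entirely and only the scalar $a$ survives. Substituting the defining relation $i[\rho,A] = \frac{1}{2}(\rho\L_A + \L_A\rho)$ then gives $i[\rho,aA+bI] = \frac{1}{2}(\rho(a\L_A) + (a\L_A)\rho)$, which shows that $a\L_A$ is a valid symmetric logarithmic derivative for the operator $aA+bI$; that is, it satisfies Eq.~\eqref{eqLG} with $A$ replaced by $aA+bI$ (existence of a Hermitian solution being guaranteed by Fact~\ref{fact:LA}).

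Given this, I would set $\L_{aA+bI} = a\L_A$ and compute, directly from the definition \eqref{eq:FA}, $\F_{aA+bI} = \tr(\rho(a\L_A)^2) = a^2\tr(\rho\L_A^2) = a^2\F_A$. The one point requiring care is that $\L_A$ is not unique: as in the construction in the proof of Fact~\ref{fact:LA}, the matrix elements $\bra{j}\L_A\ket{k}$ on the kernel of $\rho$ (where $p_j = p_k = 0$) are arbitrary. I would therefore verify that $\F_A$ is insensitive to this ambiguity, so that choosing $\L_{aA+bI} = a\L_A$ is legitimate regardless of which representative is used. This follows by writing, in the eigenbasis of $\rho$, $\F_A = \tr(\rho\L_A^2) = \sum_{j,k} p_j |\bra{j}\L_A\ket{k}|^2$: every term with $p_j = 0$ vanishes, and whenever $p_j \neq 0$ the matrix element is fixed by Eq.~\eqref{LAjk}, since then $p_j + p_k \neq 0$. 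Hence the ambiguous entries never contribute, $\F_A$ is well-defined, and the same conclusion applies on both sides of the claimed identity.

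I expect the only genuine (and mild) obstacle to be this well-definedness check; the algebra itself is a one-line consequence of linearity. Once the kernel ambiguity is dispatched the result is immediate, and as a byproduct it confirms the claim invoked below Eq.~\eqref{eqlim0} that $\F_{G+\alpha I} = \F_G$, corresponding to the case $a=1$.
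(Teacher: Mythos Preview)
Your proposal is correct and follows essentially the same approach as the paper: choose $\L_{aA+bI} = a\L_A$ (since $i[\rho,aA+bI] = a\,i[\rho,A]$) and read off $\F_{aA+bI} = a^2\F_A$ from the definition. Your additional verification that $\F_A$ is independent of the arbitrary choice on the kernel of $\rho$ is a useful point of rigor that the paper's one-line proof leaves implicit.
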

\begin{proof}
From Eq.~\eqref{LAjk} in the proof of Fact~\ref{fact:LA}, we can choose $\L_{aA + bI} = a\L_A$, so $\F_{aA + bI} = \tr(\rho\L_{aA + bI}^2) = \tr(\rho(a\L_A)^2) = a^2\F_A$ by Eq.~\eqref{eq:FA}.

\end{proof}

\begin{fact}[\cite{10.1088/1751-8113/47/42/424006}, Equations~60 and~61] \label{fact:FVarA}
For any Hermitian operator $A$ and density operator $\rho$, we have
\[ \F_A(\rho) \leq \tr(\rho A^2) - \tr(\rho A)^2 = \Var_\rho(A), \]
with equality if $\rho$ is pure.
\end{fact}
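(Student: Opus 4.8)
The plan is to work in an eigenbasis of $\rho$ and reduce the claim to a termwise comparison, using the explicit SLD from Fact~\ref{fact:LA}. Fixing $t,g$, I would write $\rho = \sum_j p_j \ket{j}\bra{j}$ and note that $\bra{j}i[\rho,A]\ket{k} = i(p_j-p_k)A_{jk}$, where $A_{jk}\coloneqq\bra{j}A\ket{k}$. Substituting into Eq.~\eqref{LAjk} gives $\bra{j}\L_A\ket{k} = 2i(p_j-p_k)A_{jk}/(p_j+p_k)$ on the block where $p_j+p_k\neq 0$; the arbitrary choice on the $p_j=p_k=0$ block is irrelevant, since those entries enter $\F_A = \tr(\rho\L_A^2) = \sum_{j,k}p_j|\bra{j}\L_A\ket{k}|^2$ weighted by $p_j$. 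Symmetrising the sum over $j\leftrightarrow k$ then yields the standard form
\[
  \F_A = 2\sum_{j,k}\frac{(p_j-p_k)^2}{p_j+p_k}\,|A_{jk}|^2 ,
\]
with the sum over $p_j+p_k\neq 0$; note that the diagonal terms drop out because of the factor $(p_j-p_k)^2$.

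Next I would expand the right-hand side in the same basis, writing $\tr(\rho A^2) = \sum_{j,k}p_j|A_{jk}|^2 = \tfrac12\sum_{j,k}(p_j+p_k)|A_{jk}|^2$ and $\tr(\rho A) = \sum_j p_j A_{jj}$, and then form the difference with $\F_A$. The key algebraic step is the identity $(p_j+p_k)-\frac{(p_j-p_k)^2}{p_j+p_k} = \frac{4p_jp_k}{p_j+p_k}$, which converts the off-diagonal comparison into a manifestly non-negative sum. Concretely, I expect
\[
  4\,\Var_\rho(A) - \F_A = 4\Big(\sum_j p_j A_{jj}^2 - \big(\textstyle\sum_j p_j A_{jj}\big)^2\Big) + 8\sum_{j\neq k}\frac{p_jp_k}{p_j+p_k}\,|A_{jk}|^2 .
\]
With the normalisation fixed by Eq.~\eqref{eqLG}, for which $\tfrac14\F_A$ plays the role of a variance (as used in the main text), the claim takes the form $\tfrac14\F_A \le \Var_\rho(A)$, and follows once both bracketed quantities are shown to be non-negative. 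The second sum is non-negative term by term. The first is the variance of the classical random variable $j\mapsto A_{jj}$ under the probability distribution $\{p_j\}$, hence non-negative by Jensen / Cauchy--Schwarz since $\sum_j p_j = 1$.

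Finally I would read off the equality conditions and specialise to pure states. Equality forces both contributions to vanish: $A_{jk}=0$ whenever $p_j,p_k>0$ with $j\neq k$, and $A_{jj}$ constant on the support of $\rho$. For a pure state, $\rho$ has a single non-zero eigenvalue, so the off-diagonal sum is empty and the diagonal ``variance'' involves one entry, forcing equality automatically. The main obstacle is not the off-diagonal part, which is immediate from the identity above, but correctly handling the diagonal of $A$: a naive termwise bound only gives $\F_A\le 4\tr(\rho A^2)$, and recovering the tighter variance bound requires isolating the $j=k$ contributions and supplying the Jensen step to absorb the subtracted $\tr(\rho A)^2$. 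I would also flag that these eigenbasis manipulations presuppose an indexable (e.g.\ discrete) spectrum for $\rho$, consistent with the standing assumption in Appendix~\ref{appL}.
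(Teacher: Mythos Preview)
Your argument is correct and is the standard eigenbasis derivation of this inequality. Note, however, that the paper does not actually supply its own proof of Fact~\ref{fact:FVarA}: it simply cites the reference for the result. So there is no ``paper's proof'' to compare against here; your write-up would serve as a self-contained replacement for that citation.

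You have also correctly diagnosed a normalisation slip in the statement of Fact~\ref{fact:FVarA} as printed. With the paper's conventions (Eqs.~\eqref{eqLG} and~\eqref{eq:FA}), your computation gives
\[
  \F_A \;=\; 2\sum_{\substack{j,k\\ p_j+p_k\neq 0}}\frac{(p_j-p_k)^2}{p_j+p_k}\,|A_{jk}|^2,
\]
and for a pure state this equals $4\Var_\rho(A)$, not $\Var_\rho(A)$. The correct inequality is therefore $\tfrac14\F_A \le \Var_\rho(A)$ (equivalently $\F_A \le 4\Var_\rho(A)$), which is exactly how the main text uses the fact (see the paragraph below Eq.~\eqref{eqlim1} and the bounds in Section~\ref{sec_compare}). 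Your decision to prove the version with the factor of $4$ is the right call; the displayed statement in the appendix is off by that factor.
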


\begin{fact} \label{fact:FA+B}
For any Hermitian operators $A,B$ and density operator $\rho$, we have
\begin{equation}
    \F_{A+B} \leq \left(\sqrt{\F_A} + \sqrt{\F_B} \right)^2,
\end{equation}
with equality iff $\sqrt{\rho}\L_A = c \sqrt{\rho}\L_B$ for $c \geq 0$ or $\sqrt{\rho}\L_B = 0$.
\end{fact}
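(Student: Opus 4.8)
The plan is to exploit the fact that the defining relation for the SLD, Eq.~\eqref{eqLG}, is linear in its operator argument. First I would note that $i[\rho,A+B] = i[\rho,A] + i[\rho,B]$, so the sum $\L_A + \L_B$ satisfies Eq.~\eqref{eqLG} with $A+B$ in place of $A$; hence it is an admissible choice of $\L_{A+B}$. Before computing with it, I would check that this substitution is legitimate: writing $\F_A = \tr(\rho \L_A^2) = \sum_{j,k} p_j |\bra{j}\L_A\ket{k}|^2$, only matrix elements with $p_j \neq 0$ contribute, and for these Eq.~\eqref{LAjk} fixes $\bra{j}\L_A\ket{k}$ uniquely (since $p_j + p_k \neq 0$). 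Thus $\F_{A+B}$ is independent of the arbitrary kernel entries, and I may evaluate it using $\L_{A+B} = \L_A + \L_B$.

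Next I would expand the resulting trace. Using that $\rho,\L_A,\L_B$ are Hermitian, the identity $\overline{\tr(\rho\L_A\L_B)} = \tr(\rho\L_B\L_A)$ lets the two cross terms combine, giving
\begin{equation}
	\F_{A+B} = \tr\!\big(\rho(\L_A+\L_B)^2\big) = \F_A + \F_B + 2\Real\tr(\rho\L_A\L_B).
\end{equation}
Writing $\rho = \sqrt\rho\sqrt\rho$ and using $(\L_A\sqrt\rho)^\dagger = \sqrt\rho\,\L_A$, I would recognise $\tr(\rho\L_A\L_B) = \langle \L_A\sqrt\rho,\,\L_B\sqrt\rho\rangle$ in the Hilbert–Schmidt inner product $\langle X,Y\rangle = \tr(X^\dagger Y)$, with $\lVert \L_A\sqrt\rho\rVert^2 = \F_A$ and likewise for $B$. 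The Cauchy–Schwarz inequality then gives
\begin{equation}
	\Real\tr(\rho\L_A\L_B) \le \big|\tr(\rho\L_A\L_B)\big| \le \sqrt{\F_A}\,\sqrt{\F_B},
\end{equation}
and substituting back yields $\F_{A+B} \le \F_A + \F_B + 2\sqrt{\F_A\F_B} = (\sqrt{\F_A}+\sqrt{\F_B})^2$.

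For the equality condition I would track the two inequalities just used. Cauchy–Schwarz saturates iff $\L_A\sqrt\rho$ and $\L_B\sqrt\rho$ are linearly dependent, i.e.\ $\L_A\sqrt\rho = \lambda\,\L_B\sqrt\rho$ for some $\lambda \in \mathbb{C}$ (or $\L_B\sqrt\rho = 0$), while the step $\Real z \le |z|$ saturates iff $z = \tr(\rho\L_A\L_B) \ge 0$. Combining the two, substituting $\L_A\sqrt\rho = \lambda\,\L_B\sqrt\rho$ gives $\tr(\rho\L_A\L_B) = \bar\lambda\,\F_B$, so nonnegativity forces $\lambda = c \ge 0$ (when $\F_B \neq 0$). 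Taking adjoints of $\L_A\sqrt\rho = c\,\L_B\sqrt\rho$ reproduces the stated form $\sqrt\rho\,\L_A = c\,\sqrt\rho\,\L_B$ with $c \ge 0$, together with the degenerate case $\sqrt\rho\,\L_B = 0$.

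The algebra is routine; the only genuinely delicate part is the equality analysis, where one must correctly combine the two saturation conditions and verify that real-nonnegativity of $\tr(\rho\L_A\L_B)$ pins down $c \ge 0$ rather than merely $c$ real. A secondary point requiring care is the well-definedness check, ensuring $\F_{A+B}$ does not depend on the arbitrary entries of $\L_A,\L_B,\L_{A+B}$ on the kernel of $\rho$, which justifies using $\L_{A+B} = \L_A + \L_B$ in the first place.
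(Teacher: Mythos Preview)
Your proof is correct and essentially coincides with the paper's: the paper observes $\F_A = \|\sqrt{\rho}\,\L_A\|_2^2$, takes $\L_{A+B} = \L_A + \L_B$ from Eq.~\eqref{LAjk}, and applies the triangle inequality for the Hilbert--Schmidt norm, which is exactly your Cauchy--Schwarz step in expanded form. Your version is slightly more careful in verifying that $\F_{A+B}$ is independent of the arbitrary kernel entries and in spelling out the equality analysis, but the underlying argument is the same.
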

\begin{proof}
Note that $\F_{A} = \|\sqrt{\rho}\L_A\|^2_2$, with $\|\cdot\|_2$ the Hilbert-Schmidt norm. From Eq.~\eqref{LAjk}, $\L_{A + B} = \L_A + \L_B$. Hence,
\begin{align}
    \F_{A + B} &= \|\sqrt{\rho}\mathcal{L}_{A+B}\|_2^2 \nonumber \\
    &= \|\sqrt{\rho}(\L_A + \L_B)\|_2^2 \nonumber \\
    &\leq \Big(\|\sqrt{\rho}\L_A\|_2 + \|\sqrt{\rho}\L_B\|_2 \Big)^2 \nonumber\\
    &= \left(\sqrt{\F_A} + \sqrt{\F_B}\right)^2,
\end{align}
and the triangle inequality used in the third line is saturated iff $\sqrt{\rho}\L_A = c \sqrt{\rho}\L_B$ for $c \geq 0$ or $\sqrt{\rho}\L_B = 0$.

\end{proof}

\section{Details for Section~\ref{sec_compare}}
\label{appComp}

In this appendix, we fill in some of the details for Section~\ref{sec_compare}. We start by reviewing the QFI calculations in~\cite{prx,preskill}, which lead to the bound in Eq.~\eqref{eqlinbound} for the time-independent, HLS case (Appendix~\ref{appC1}). We then show (Appendix~\ref{appC2}) that the best bound that can be obtained for the HNLS case using the methods in~\cite{prx,preskill} is Eq.~\eqref{eqfqtpreskill}. Finally, we provide the technical details leading to our HNLS bound in Eq.~\eqref{eqfhnls} (Appendix~\ref{appC3}).

\toclesslab\subsection{Review of previous results}{appC1}

The bounds on the QFI in
\cite{prx,preskill}, which apply in the case
of time-independent $H'$ and $L_j$, were derived from formulae
for the QFI for $N$ identical operations~\cite{10.1103/PhysRevLett.113.250801,10.1088/1751-8113/41/25/255304}, taking the limit in which $N \to \infty$, so that the time interval $dt = t/N \to 0$ for each operation becomes
correspondingly short.
The form given in
\cite{preskill} (\cite{prx} gives a similar expression) is
\begin{equation}
	\F(t)\le 4 \frac{t}{dt} \|\alpha_{dt}\| +
	4 \left(\frac{t}{dt}\right)^2 \|\beta_{dt}\|
	\left(\|\beta_{dt}\| + 2 \sqrt{\|\alpha_{dt}\|}\right),
	\label{eq_preskillf}
\end{equation}
where 
\begin{equation}
    \alpha_{dt} \coloneqq \sum_j\Big(K'_j -i \sum_k h_{jk} K_k\Big)^\dagger \Big(K'_j - i\sum_l h_{jl}K_l\Big)
\end{equation}
and
\begin{equation}
\beta_{dt} \coloneqq i\sum_j \Big(K'_j - i\sum_k h_{jk} K_k\Big)^\dagger K_j,
\end{equation}
with $K_0 = I -(igH' +\frac{1}{2}\sum_{j\geq 1} L_j^\dagger L_j)dt$ and $K_j = L_j\sqrt{dt}$ for $j \geq 1$ the Kraus operators describing the time evolution over each interval of length $dt$, and $h$ an arbitrary Hermitian matrix. The RHS of Eq.~\eqref{eq_preskillf} can be minimised over the choice of $h$. 

Write $X = X^{(0)} + X^{(1)}\sqrt{dt} + X^{(2)}dt  + \dots$ as the expansion of any quantity $X$ in powers of $\sqrt{dt}$.
To obtain a sensible bound from Eq.~\eqref{eq_preskillf}
in the $dt \rightarrow 0$ limit, we need
that 
\begin{align} &\alpha_{dt}^{(0)} = \alpha_{dt}^{(1)} = 0, \nonumber \\
&\beta_{dt}^{(0)} = \beta_{dt}^{(1)} = 0.  \label{alphabeta0}
\end{align}
Moreover, if $\beta_{dt}^{(2)} \neq 0$, then we
need that $\alpha_{dt}^{(2)} = 0$
for the $8 t^2 \|\beta_{dt}^{(2)}\| \sqrt{\|\alpha_{dt}\|} / dt$
term to not blow up.

As shown in~\cite{preskill}, $\alpha_{dt}^{(0)} = 0$ $\Leftrightarrow$ $h_{0j}^{(0)} = 0 \enspace \forall\, j$ $\Rightarrow$ $\alpha_{dt}^{(1)} = \beta_{dt}^{(0)} = 0$, in which case $\beta_{dt}^{(1)} = 0$ $\Leftrightarrow$ $-h_{00}^{(1)} = 0$. Then, under these conditions, we have\footnote{This differs from Equation~(52) in~\cite{preskill} slightly due to some small typos therein.}
\begin{align} 
\beta_{dt}^{(2)} &= -H' - h_{00}^{(2)} I \nonumber \\
&\quad - \sum_{j \geq 1}\left(h_{0j}^{(1)}L_j + h_{0j}^{(1)*}L_j^\dagger\right)
- \sum_{j,k \geq 1} h_{jk}^{(0)}L_j^\dagger L_k.
\end{align}
We see that this has the form of Eq.~\eqref{eq_hdecomp} if one makes the notational substitutions
\begin{align}
    &G \leftrightarrow -\beta^{(2)}_{dt}, \quad \alpha \leftrightarrow - h_{00}^{(2)}, \\ \quad &\beta_j \leftrightarrow -h_{j0}^{(1)}, \quad \gamma_{jk} \leftrightarrow - h_{jk}^{(0)}. \label{notation}
\end{align}
It can also be checked that $\alpha_{dt}^{(2)} = \sum_j A_j^\dagger A_j$ when Eq.~\eqref{alphabeta0} is satsified, where $A_j$ is defined as in Eq.~\eqref{eq:A_j}. 

\cite{preskill} uses Eq.~\eqref{eq_preskillf} to derive a bound for the HLS case, where $\beta^{(2)}_{dt}$ can be set to $0$. They show that with this choice, Eq.~\eqref{eq_preskillf} reduces in the $dt \to 0$ limit to Eq.~\eqref{eqlinbound}, which is the bound we compare to in section~\ref{sec_hls}.

\toclesslab\subsection{HNLS bound derived from Eq.~\eqref{eq_preskillf}}{appC2}

We now show that the best possible bound given by Eq.~\eqref{eq_preskillf} for the HNLS case has the form in Eq.~\eqref{eqfqtpreskill}. Since $H'$ is not in the Lindblad span, $\beta^{(2)}_{dt}$ cannot be set to zero, so as noted above, we need to have $\alpha_{dt}^{(2)} = 0$. This holds iff $A_j = 0$ for all $j$, i.e., (in our notation)
\begin{equation} \label{Aj01}
    \sum_k \gamma_{jk}L_k = -\beta_j I
\end{equation}
by Eq.~\eqref{eq:A_j}, which implies that for all $j$,
\begin{equation} \label{Aj02}
    \sum_k \gamma_{jk}L_k = \sum_k \gamma_{jk} \bra{\psi}L_k\ket{\psi}
\end{equation}
for any $\ket{\psi}$. We then have, for arbitrary $\ket{\psi}$,
\begin{align}
    &\sum_j (\beta_j^* L_j + \beta_j L_j^\dagger) + \sum_{j,k}\gamma_{jk}L_j^\dagger L_k \nonumber \\
    &\quad = \sum_{j,k}\gamma_{jk}L_j^\dagger L_k \nonumber  \\
    &\quad = \sum_j L_j^\dagger \sum_k \gamma_{jk}\bra{\psi}L_k\ket{\psi} \nonumber \\
    &\quad = \sum_k (-\beta_k^* I) \bra{\psi}L_k \ket{\psi},
\end{align}
using Eqs.~\eqref{Aj01} and~\eqref{Aj02} along with $\gamma_{jk} = \gamma_{kj}^*$. This shows that $\sum_j(\beta_j^* L_j + \beta_j L_j^\dagger) + \sum_{j,k}\gamma_{jk}L_j^\dagger L_k$ is proportional to the identity. Therefore, from Eq.~\eqref{eq_hdecomp}, $G$ must be of the form
\begin{equation}
    G = H' - \overline{\alpha}I
\end{equation}
for some $\overline{\alpha} \in \mathbb{R}$. Eq.~\eqref{eqfqtpreskill} then follows from Eq.~\eqref{eq_preskillf} by noting that $G$ corresponds to $-\beta_{dt}^{(2)}$ (under the
conditions
in Eq.~\eqref{alphabeta0})
and taking the minimum over $\overline{\alpha}$.

\bigskip

\bigskip

\toclesslab\subsection{HNLS bound derived from Eq.~\eqref{eqlim0}}{appC3}

In Section~\ref{sec_hnls}, we derived bounds
on $\F(t)$ in the generic case where $H'$ and $L_j$ are time-dependent, and $H'$ may or may not be in the Lindblad span at different times.
Assuming time-independent bounds $c_0, c_1, c_2$
on the relevant operators, with $c_1 \geq c_0$ (cf.~Section~\ref{sec_hnls}),
the bound for $t \geq t_c \coloneqq \frac{2c_2}{(c_1 - c_0)^2}\ln\left(\frac{2 c_1}{c_1 + c_0}\right)$ is given in Eq.~\eqref{eqfhnls} as $\F(t) \leq y(t-t_c)^2$, where the function $y$ is defined as
\begin{widetext}
\begin{equation} \label{y}
    y(t) \coloneqq \frac{c_2}{c_0}\left[-W_{-1}\Bigg(-\frac{1}{e}\left(\frac{c_0+c_1}{c_1 - c_0}\right)\exp\left( -\frac{2c_0^2}{c_2}t- \frac{2c_0}{c_1-c_0}\right) \Bigg)-1 \right]
\end{equation}
\end{widetext}
with
$W_{-1}$ the lower branch of the Lambert-W
function~\cite{10.1007/BF02124750}.
From~\cite{10.1109/LCOMM.2013.070113.130972}, for any $u > 0$,
\begin{equation} \label{W-1bound}
    1 + \sqrt{2u} + \frac{2}{3}u \leq -W_{-1}(-e^{-1-u}) \leq 1 + \sqrt{2u} + u.
\end{equation}
We can write $y(t) = \frac{c_2}{c_0}[- W_{-1}(-e^{-1-u(t)})-1]$
with
\begin{equation}
    u(t) \coloneqq \frac{2c_0^2}{c_2} t + \frac{2c_0}{c_1 - c_0} - \ln\left(\frac{c_0+c_1}{c_1-c_0}\right),
\end{equation}
which is non-negative for all $t \geq 0$, since $c_1 \geq c_0$ and $c_2 \geq 0$. Thus, by Eq.~\eqref{W-1bound}
\begin{align}
    \frac{c_2^2}{c_0^2}\left[\sqrt{2u(t)} +\frac{2}{3}u(t) \right]^2\leq y(t)^2 &\leq \frac{c_2^2}{c_0^2}\left[\sqrt{2u(t)} + u(t) \right]^2,
\end{align}
so since $u(t) = 2\frac{c_0^2}{c_2}t + \OO(1)$, 
	\begin{equation}
	y(t)^2 = 4c_0^2 t^2\left(1 + \OO(t^{-1/2})\right)
	\end{equation}
for large $t$. We have $\F(t) \leq y(t-t_c)^2$, so this
leads to Eq.~\eqref{eqFtscaling}, confirming the expected scaling.

We can also derive a simpler but weaker
bound than Eq.~\eqref{eqfhnls}, by
choosing $\delta(t) = 1$ for all times $t$
in Eq.~\eqref{eq34}, rather
than choosing the optimal $\delta$ at
each $t$. In that case, we obtain $\dot \F \le 4(c_0\sqrt\F + c_2)$,
giving (for $\F(t=0)= 0$)
\begin{align}
    \F(t) &\leq \frac{c_2^2}{c_0^2}\left[-W_{-1}\Bigg(-\frac{1}{e}\exp\left(-\frac{2c_0^2}{c_2}t\right) \Bigg) - 1\right]^2 \nonumber \\
    &\leq 4c_0^2 t^2\left(1+\frac{1}{c_0}\sqrt{\frac{c_2}{t}}\right)^2 \nonumber \\
    &= 4c_2 t + 4t^2 c_0 \left(c_0 + 2\sqrt{\frac{c_2}{t}}\right). \label{eqfhnls2}
\end{align}
Thus, we obtain a bound on $\F(t)$
that looks rather similar to Eq.~\eqref{eq_preskillf};
an important difference is that it still behaves sensibly
if $c_0$ and $c_2$ (which are analogous to $\|\beta_{dt}^{(2)}\|$ and $\|\alpha_{dt}^{(2)}\|$ in the context of Eq.~\eqref{eq_preskillf}) are both non-zero.
Of course, Eq.~\eqref{eqfhnls2} is looser
than Eq.~\eqref{eqfhnls}; while this looser bound has the correct large-$t$ scaling, it
is not tight at small $t$ (unlike Eq.~\eqref{eqfhnls}).

\section{Lindblad parameter estimation}
\label{sec_lindblad_app}

In the main text, we assumed that only
the Hamiltonian depends on our parameter $g$.
This is a good model for many signal detection problems,
in which a small, effectively classical influence acts
on the detection system. However, we may also be interested
in determining properties of the system-environment 
coupling, such as the temperature of a thermal
environment, or the strength of the coupling.
This can be modelled by estimating
a parameter controlling the Lindblad operators.

Hence, in this appendix, we consider the most general case where both the Hamiltonian
$H$ and the Lindblad operators $L_j$ depend on $g$.
By substituting Eq.~\eqref{eqsme} into the expression for $\dot{\F}$ given in Eq.~\eqref{eq_dfq2} and simplifying using Eq.~\eqref{eq_ldefn}, we obtain
\begin{align}
    \dot{\F} &= 2i\tr\left(\rho[H',\L]\right) - \sum_j \tr\left(\rho [L_j,\L]^\dagger [L_j,\L]\right)\nonumber \\ 
    &-2 \sum_j \mathrm{Re}\left\{\tr\left(\rho\big(L_j^{\prime\dagger}[L_j,\L] + L_j^\dagger[L_j',\L]\big) \right) \right\}. \label{FdotL}
\end{align}

If we allow ourselves complete freedom to choose 
the Lindblad operators, it is easy to see that,
even if $H' = 0$ here, we can obtain similarly complicated
behaviour to the $g$-dependent $H$ case considered in the main text.
In particular, 
suppose that we have one Lindblad operator $L_1$ and that at some $g = g_0$, we have $L_1 = I$ and $L_1^{\prime \dagger} = iG$ for some Hermitian operator $G$. Then, the third term in Eq.~\eqref{FdotL} is equal to $2 i \tr(\rho [G,\L])$, so $\dot{\F}$ would have the same form as our expression in Eq.~\eqref{eqfqdot} for the case of $g$-independent Lindblad operators, except with $H'$ replaced by $H' + G$. Consequently, for different choices
of $G$, we can obtain all of the different behaviours
studied in the main text. In particular, if $H' + G$ is
not in the Lindblad span, then $\dot \F$ can be
arbitrarily large,
for appropriate $\rho$ and $\L$ (corresponding to $\F$
being able to grow faster than linearly, as in the HNLS case).

This $L_1 = I$ example is somewhat artificial, since
a Lindblad operator that is proportional to the identity
has no effect on the master equation. In some sense,
the behaviour described above arises from a non-canonical choice
of Lindblad operators. However, we can show that
even for a canonical parameterisation of the Lindblad
terms, similar behaviour can still arise (in particular,
$\dot \F$ can still become arbitrarily large).

A canonical way
of writing the master equation for finite-dimensional systems is
\begin{equation} \label{canonical}
\dot{\rho} = -i[H,\rho] + \sum_{k,l}h_{kl}\Big(M_k\rho M_l^\dagger - \frac{1}{2}\left\{M_l^\dagger M_k,\rho\right\}\Big),
\end{equation}
where $h$ is a positive semidefinite matrix and $\{M_k\}_k$ is a fixed orthonormal basis for traceless operators.
This is canonical
in the sense that different choices of $h$ correspond
to physically different master equations, whereas
different choices of Lindblad operators $\{L_j\}_j$ in Eq.~\eqref{eqsme} can
give rise to the same master equation. To arrive at this form, one takes each Lindblad operator $L_j$ in Eq.~\eqref{eqsme} to be traceless wlog (modifying $H$ if necessary) and decomposes it in the basis $\{M_k\}_k$. 

Substituting Eq.~\eqref{canonical} into Eq.~\eqref{eq_dfq2} gives 
\begin{align}
    \dot{\F} 
    &= 2i\tr(\rho [H',\L]) - \sum_{k,l} h_{kl}\tr\left(\rho[M_l,\L]^\dagger[M_k,\L]\right)\nonumber \\
    &- 2\sum_{k,l}\mathrm{Re}\Big\{h_{kl}' \tr\left(\rho M_l^\dagger[M_k,\L]\right) \Big\}. \label{FdotM}
\end{align}
Since $h$ is positive semidefinite, we can write $h = s^2$ for some Hermitian matrix $s$, and we take $s$ to be differentiable with respect to $g$. Then,
\begin{align}
    &-2\sum_{k,l}\mathrm{Re}\Big\{h_{kl}'\tr\left(\rho M_l^\dagger [M_k,\L]\right)\Big\} \nonumber\\
    & = -2\sum_j \mathrm{Re}\Bigg\{ \tr\Bigg(\rho \Big(\sum_l s_{lj}' M_l\Big)^\dagger \Big[\sum_k s_{kj}M_k,\L\Big]\Bigg)\Bigg\} \nonumber \\
    &\quad\, -2\sum_j \mathrm{Re} \Bigg\{\tr\Bigg(\rho \Big(\sum_l s_{lj}M_l\Big)^\dagger \Big[\sum_k s_{kj}'M_k,\L\Big] \Bigg) \Bigg\}. \label{FdotM3}
\end{align}
Noting that $\{\sum_k s_{kj} M_k\}_j$ is a valid set of Lindblad operators for Eq.~\eqref{canonical}, the first term of the RHS of Eq.~\eqref{FdotM3} can be combined with the first two terms in Eq.~\eqref{FdotM}, and then the resulting expression can be upper-bounded using a similar argument as that leading to Eq.~\eqref{quadratic_bound} in the main text. 
In particular, if $H'$ is in the Lindblad span,
then the resulting bound is $\L$-independent.
However, the second term in Eq.~\eqref{FdotM3} cannot always be bounded in this
way. 
Specifically, if $s$ is singular, and $s'$ does not map
the kernel of $s$ to itself, then for some $\rho$,
we can make $\dot \F$ arbitrarily large by 
choosing $\L$ appropriately.

As a simple example, consider a two-level system, with a single Lindblad operator
$L_1(g) = \sqrt \gamma (\cos g \,\sigma_z
+ \sin g \, \sigma_y)$ (with $\gamma > 0$),
so that $g$ parameterizes the `direction' of the dephasing
on the Bloch sphere. 
At $g =0$, 
\begin{align} \label{eq_gdirection}
	\dot \F &= 2 i \tr (\rho [H',\L])
	+ 2 \gamma \Real \tr (\rho[\sigma_z,\L]
	\sigma_y + [\sigma_y,\L] \sigma_z) \nonumber \\
	&- \gamma \tr(\rho[\sigma_z,\L]^\dagger [\sigma_z,\L]),
\end{align}
so if we have $\L = \alpha I + \beta \sigma_z$ for some $\alpha,\beta\in\mathbb{R}$,
then $\dot \F = 2 \beta(i \tr(\rho [H',\sigma_z])
+ 2 \gamma \tr (\rho \sigma_y))$.
Consequently, even if $H'$ is in the Lindblad span (so that $[H',\sigma_z] = 0$),
we can have arbitrarily large $\dot \F$ if $\tr(\rho\sigma_y) \neq 0$.

However, for more restricted forms of Lindblad parameter dependence, 
there do exist $\L$-independent
bounds on $\dot \F$. In particular,
if $g$ only affects the magnitude of the Lindblad operators, in the sense that $L_j = f_j(g) \hat L_j$ for some constant
operators $\hat L_j$, with $f_j$ real wlog, then from Eq.~\eqref{FdotL}, we have
(for $H'=0$), 
\begin{align}
    \dot{\F} &= -4 \sum_j \mathrm{Re}\left\{\tr\left( f_j f_j'\tr\big(\rho \hat{L}_j^\dagger [\hat{L}_j,\L]\big) \right) \right\}\nonumber\\
    &\quad- \sum_j f_j^2 \tr\big(\rho [\hat{L}_j^\dagger,\L]^\dagger[\hat{L}_j,\L]\big) \\
    &\leq 4\sum_j \Bigg(|f_j'|\sqrt{\tr(\rho\hat{L}_j^\dagger \hat{L}_j)} \sqrt{f_j^2 \tr(\rho[\hat{L}_j,\L]^\dagger[\hat{L}_j,\L])} \nonumber\\
    &\quad - \frac{1}{4} f_j^2\tr(\rho[\hat{L}_j,\L]^\dagger [\hat{L}_j,\L])\Bigg) \\
    &\leq 4\sum_j (f_j')^2 \tr\left(\rho \hat{L}_j^\dagger \hat{L}_j\right) \label{eq_dflj}
\end{align}
by the same logic as that leading to Eq.~\eqref{quadratic_bound} (non-zero $H'$ can be handled straightforwardly
via a very similar calculation).
This illustrates an important difference from
the Hamiltonian parameter estimation case studied in the main text, where we found that even if $H' \propto H$, it is not necessarily in the Lindblad
span, so there may not be a $\L$-independent bound on
$\dot \F$. In contrast, if $L_j' \propto L_j$,
then Eq.~\eqref{eq_dflj} gives such a $\L$-independent bound.
Thus, for estimating e.g., the strength of
a specific coupling to the environment, or
the temperature of a thermal bath [cf.~Eq.~\eqref{eq_oscsme}], $\F$ can grow
at most linearly at large times (given
a time-independent bound on $\langle \hat L_j^\dagger \hat L_j \rangle$).

Moreover, unlike in Hamiltonian parameter estimation,
for which $\F \sim t^2$ at small $t$
if we start from $\F(t=0) = 0$ (Section~\ref{secshort}), for master equations with $g$-dependent Lindblads,
it is possible for $\dot \F$ to be nonzero
even initially.
In some circumstances, we can attain
the bound from Eq.~\eqref{eq_dflj} immediately.
For example, suppose that we have a single
Lindblad operator $L_1$, with $g$-dependent magnitude.
Then, if we start in a pure state $\ket \psi$, 
and $L_1 \ket \psi \perp \ket \psi$,
the inital value of $\dot \F$ saturates
Eq.~\eqref{eq_dflj}. As a specific case,
we can consider a damped harmonic oscillator,
starting in a Fock state $\ket n$,
with $L_1 = g \sqrt{\gamma} a$, so $L_1 \ket n
\propto \ket{n-1}$. Since the rate of damping-induced transitions
is immediately non-zero (and only decreases with time),
to determine the strength of the damping,
we cannot do better than checking for decays
over many short time periods (with intermediate resets).
Since the environment is taken to be Markovian, with vanishing
coherence time, there is no quantum Zeno effect, and
we cannot enhance the damping rate by building
up correlations with the detector system.

Therefore, in circumstances like
these, the parameter estimation story is 
considerably less complicated than for Hamiltonian
parameter estimation. Even though it is possible in principle for $g$-dependent Lindblad operators to yield more complex behaviour---for instance, as in the $g$-dependent `dephasing direction' example (Eq.~\eqref{eq_gdirection}) discussed above---this does not seem
to commonly arise in cases of physical interest.

\bibliography{metrology}

\end{document}